\title[$\widetilde{O}(T^{-1})$ Convergence to (Coarse) Correlated Equilibria in Markov Games]{$\widetilde{O}(T^{-1})$ Convergence to (Coarse) Correlated Equilibria in Full-Information General-Sum Markov Games}
\let\proof\relax
\let\endproof\relax
\newtheorem{dfn}{Definition}
\newtheorem{thm}{Theorem}
\newtheorem{lem}{Lemma}
\newenvironment{proofsketch}{\proof}{\endproof}
\newcommand{\ra}{\rightarrow}
\newcommand{\mb}[1]{\mathbb{#1}}
\newcommand{\mc}[1]{\mathcal{#1}}
\newcommand{\rr}{\mathbb{R}}
\newcommand{\ee}{\mathbb{E}}
\newcommand{\pp}{\mathbb{P}}
\newcommand{\nn}{\mathbb{N}}
\renewcommand{\epsilon}{\varepsilon}
\newcommand{\mcs}{\mathcal{S}}
\newcommand{\mca}{\mathcal{A}}
\newcommand{\mcb}{\mathcal{B}}
\newcommand{\mcn}{\mathcal{N}}
\newcommand{\abs}[1]{\left| #1 \right|}
\newcommand{\norm}[1]{\left\| #1 \right\|}
\newcommand{\inner}[1]{\left\langle #1 \right\rangle}
\newcommand{\floor}[1]{\left\lfloor #1 \right\rfloor}
\renewcommand{\l}{\left(}
\renewcommand{\r}{\right)}
\renewcommand{\L}{\left[}
\newcommand{\R}{\right]}
\newcommand{\T}{^\top}
\newcommand\defeq{:=}
\DeclareMathOperator*{\argmax}{argmax} 
\newcommand{\amax}{A_{\max}}
\newcommand{\aall}{\mca_{\operatorname{all}}}
\author{%
 \Name{Weichao Mao} \Email{weichao2@illinois.edu}\\
 \Name{Haoran Qiu} \Email{haoranq4@illinois.edu}\\
 \addr University of Illinois Urbana-Champaign
 \AND
 \Name{Chen Wang} \Email{chen.wang1@ibm.com}\\
 \Name{Hubertus Franke} \Email{frankeh@us.ibm.com}\\
 \addr IBM Research
 \AND
 \Name{Zbigniew Kalbarczyk} \Email{kalbarcz@illinois.edu}\\
 \Name{Tamer Ba\c{s}ar} \Email{basar1@illinois.edu}\\
 \addr University of Illinois Urbana-Champaign
}
\begin{document}

\maketitle

\begin{abstract}
No-regret learning has a long history of being closely connected to game theory.
Recent works have devised uncoupled no-regret learning dynamics that, when adopted by all the players in normal-form games, converge to various equilibrium solutions at a near-optimal rate of $\widetilde{O}(T^{-1})$, a significant improvement over the $O(1/\sqrt{T})$ rate of classic no-regret learners.
However, analogous convergence results are scarce in Markov games, a more generic setting that lays the foundation for multi-agent reinforcement learning.
In this work, we close this gap by showing that the optimistic-follow-the-regularized-leader (OFTRL) algorithm, together with appropriate value update procedures, can find $\widetilde{O}(T^{-1})$-approximate (coarse) correlated equilibria in full-information general-sum Markov games within $T$ iterations.
Numerical results are also included to corroborate our theoretical findings.\footnote{This preprint carries essentially the same results and title as one that was submitted to a conference on December 8, 2023, except the appearance here on page 3 of a post-submission note which provides a comparison with a recently (January 26, 2024) posted arXiv paper by \cite{cai2024near}.}
\end{abstract}

\begin{keywords}%
  Learning in games, reinforcement learning, correlated equilibrium, no-regret learning 
\end{keywords}

\vspace{-.2em}\section{Introduction}\label{sec:intro}
Online learning has an intimate connection to game theory for finding solutions under various equilibrium concepts \citep{robinson1951iterative}. 
In no-regret learning, the learner aims to maximize its cumulative utility in response to the (possibly adversarial) outcome sequence generated by the environment. 
For normal-form games (NFGs), a folklore result states that if all the players adopt certain no-regret learning algorithms that have $O(\sqrt{T})$ regret guarantees against an adversarial environment, then they can find an $O(1/\sqrt{T})$-approximate Nash equilibrium  in two-player zero-sum games or an $O(1/\sqrt{T})$-approximate (coarse) correlated equilibrium  in general-sum games after $T$ iterations \citep{hart2000simple,cesa2006prediction}.
A broad family of no-regret learning algorithms fit into this category, including the well-known multiplicative weight updates \citep{littlestone1994weighted,freund1997decision}, follow-the-regularized/perturbed-leader \citep{kalai2005efficient}, and mirror descent \citep{nemirovskij1983problem}.

While the $O(\sqrt{T})$ regret is unimprovable against an adversarial environment, it need not be the case for learning equilibria in games because each player in a repeated game is not facing adversarial payoffs, but instead is interacting with other players who also exhibit certain learning behavior. 
Indeed, the seminal work \citep{daskalakis2011near} developed an algorithm based on the Nesterov's excessive gap technique and established its $\widetilde{O}(T^{-1})$ convergence\footnote{Throughout the paper, we use $\widetilde{O}(\cdot)$ to suppress the poly-logarithmic dependence on $T$. } to Nash equilibria (NE) in two-player zero-sum NFGs when the algorithm is adopted by both players. 
Recent works \citep{rakhlin2013optimization,syrgkanis2015fast,foster2016learning,chen2020hedging,daskalakis2021near,anagnostides2022near,anagnostides2022uncoupled} significantly strengthened this line of results by devising other no-regret learning dynamics that find different equilibrium solutions at a faster rate than $O(1/\sqrt{T})$.
Notably, \cite{syrgkanis2015fast} showed that if all the players in a general-sum NFG employ an \emph{optimistic} version of follow-the-regularizer-leader (henceforth OFTRL), the players' strategies converge to the set of coarse correlated equilibria (CCE) at a fast rate of $O(T^{-3/4})$; such a rate was later improved to $\widetilde{O}(T^{-1})$ by \cite{daskalakis2021near}. 
More recently, the $\widetilde{O}(T^{-1})$ rate was established for swap regrets and correlated equilibria (CE) in NFGs \citep{anagnostides2022near,anagnostides2022uncoupled}.

\begin{table}[!tbp]
	\centering\caption{No-regret learning convergence rates in NFGs and Markov games. 
	}\label{tbl:overview}
	\begin{tabular}{lll}
		\hline
		\multicolumn{1}{l}{Learning objective}& Normal-form games & Markov games  \\ \hline
		\begin{tabular}[]{@{}l@{}}Nash equilibrium\\ (two-player zero-sum)\end{tabular}  & $\widetilde{O}(T^{-1})$ \citep{daskalakis2011near} & $O(T^{-1})$ \citep{yang2022t} \\ \hline
		\multirow{2}{*}{\begin{tabular}[]{@{}l@{}}Correlated\\ equilibrium\end{tabular}} &  \multirow{2}{*}{$\widetilde{O}(T^{-1})$ \citep{anagnostides2022uncoupled}} & $\widetilde{O}(T^{-1/4})$ \citep{erez2022regret} \\
		&  &\cellcolor{gray!30}$\widetilde{O}(T^{-1})$ (Theorem~\ref{thm:main}) \\ \hline
		\multirow{2}{*}{\begin{tabular}[]{@{}l@{}}Coarse correlated\\ equilibrium\end{tabular}} & \multirow{2}{*}{$\widetilde{O}(T^{-1})$ \citep{daskalakis2021near}} & $\widetilde{O}(T^{-3/4})$ \citep{zhang2022policy} \\
		&  &\cellcolor{gray!30}$\widetilde{O}(T^{-1})$ (Theorem~\ref{thm:cce}) \\ \hline
	\end{tabular}
\end{table}

Despite the encouraging fast convergence results in NFGs, very few results are known for the more challenging regime of Markov games (also known as stochastic games \citep{shapley1953stochastic}). 
The only exceptions include \cite{zhang2022policy} and \cite{yang2022t}, who established the $\widetilde{O}(T^{-1})$ convergence of OFTRL (together with smooth value updates) to NE in two-player zero-sum full-information Markov games, matching the best rates in NFGs. 
As for general-sum Markov games, the best known results for CCE and CE are $\widetilde{O}(T^{-3/4})$ \citep{zhang2022policy} and $\widetilde{O}(T^{-1/4})$ \citep{erez2022regret}, respectively, which largely lag behind their $\widetilde{O}(T^{-1})$ counterparts in NFGs. 
In fact, establishing $\widetilde{O}(T^{-1})$ convergence to CCE or CE in general-sum Markov games has been raised as an important open question by \cite{yang2022t}. 

\noindent\textbf{Contributions.} In this work, we close this gap by developing no-regret learning algorithms with accompanying value update procedures and establishing their fast $\widetilde{O}(T^{-1})$ convergence to CCE or CE in general-sum Markov games. 
For CE (Section~\ref{sec:ce}), we consider the OFTRL algorithm with a log-barrier regularizer, and integrate it with the celebrated external-to-swap-regret reduction \citep{blum2007external} and smooth value updates.
Our $\widetilde{O}(T^{-1})$ convergence analysis builds on a Regret bounded by Variation in Utilities (RVU) property \citep{syrgkanis2015fast} for the weighted swap regret at each state. 
We make a seemingly trivial observation that swap regrets are always non-negative and use it to easily bound the second-order path lengths of the learning dynamics.
For CCE (Section~\ref{sec:cce}), we consider standard OFTRL with negative entropy regularization but combine it with a stage-based value update scheme. 
We show that this algorithm induces a no-\emph{average}-regret problem within each stage, which allows us to apply existing analysis for the \emph{individual} regret of the players \citep{daskalakis2021near}. 
Table~\ref{tbl:overview} compares our results with the best-known convergence rates of no-regret learning in NFGs and Markov games. 
We further provide numerical results (Section~\ref{sec:simulations}) to validate the $\widetilde{O}(T^{-1})$ convergence behavior of our algorithms. 

\noindent\textbf{Post conference submission note.} An independent paper by \cite{cai2024near} investigates the same problem of $\widetilde{O}(T^{-1})$ convergence to CE in general-sum Markov games. They also consider OFTRL with the log-barrier regularizer and smooth value updates and establish a similar convergence rate as in Theorem~\ref{thm:main} of this paper. Interestingly, their algorithm performs V-value updates, which allows for a more preferred decentralized implementation, and they prove its equivalence to Q-value updates as we had done in this paper. Compared with their convergence rate for CE, however, our Theorem~\ref{thm:main} happens to shave off an additional $\log T$ factor by using a more refined analysis of a weighted average of a sequence (Lemma 4 from \cite{yang2022t}). Our work has further established the $\widetilde{O}(T^{-1})$ convergence to CCE (Section~\ref{sec:cce}), which has not been considered by \cite{cai2024near}.


\noindent\textbf{Further related work.} Learning in Markov games has been widely studied in multi-agent reinforcement learning (MARL) \citep{bai2020provable,wei2021last,cen2021fast,cen2022faster,zhao2022provably,cai2023uncoupled,wang2023breaking,mao2023multi}. 
Closest to ours are the works by \cite{liu2020sharp,jin2022v,songcan,mao2022improving,mao2022provably} who have shown $\widetilde{O}(1/\sqrt{T})$ convergences to CCE or CE in multi-player general-sum Markov games under bandit feedback.

\vspace{-0.2cm}\section{Preliminaries}\label{sec:preliminaries}
\textbf{No-regret learning.} Let $\mca$ be a finite set of actions. 
At each iteration $t\in\nn_+$, a learning agent selects a strategy $\bm{x}^t\in\Delta(\mca)$ as a probability distribution over the action space. 
The environment returns a utility vector $\bm{u}^t\in\rr^{|\mca|}$, and the agent obtains a utility of $\langle \bm{x}^t,\bm{u}^t\rangle$. 
The classic notion of regret, or more generally $\Phi$-regret \citep{greenwald2003general}, is used to measure the performance of the learning agent in terms of the suboptimality in hindsight. 
Formally, given a sequence of strategies $(\bm{x}^1,\dots, \bm{x}^T)$ over $T$ iterations, the incurred $\Phi$-regret is defined as
\begingroup
\setlength{\abovedisplayskip}{4pt}
\setlength{\belowdisplayskip}{4pt}
\setlength{\abovedisplayshortskip}{4pt}
\setlength{\belowdisplayshortskip}{4pt}
\begin{equation}\label{eqn:phi_regret}
	\operatorname{Reg}^T_{\Phi} \defeq \max_{\phi^*\in\Phi} \sum_{t=1}^T \inner{\phi^*(\bm{x}^t) - \bm{x}^t, \bm{u}^t}.
\end{equation}
\endgroup 
In particular, $\operatorname{Reg}^T_{\Phi}$ is called \emph{external regret} (or simply \emph{regret}) if $\Phi$ is the set of all constant functions $\{\phi: \phi(\bm{x}) =\phi(\bm{x}'), \forall \bm{x},\bm{x}'\in\Delta(\mca)\}$, and \emph{swap regret} if $\Phi$ is the set of all linear transformations $\{\phi: \phi(\bm{x}) = Q\T \bm{x}, \text{ where } Q \text{ is a row stochastic matrix}\}$. 
One can see that swap regret is a more powerful notion of hindsight rationality by allowing a broader class of possible deviations. 

\noindent\textbf{Markov game.} An $N$-player episodic Markov game is defined by a tuple $\mb{G}=(\mc{N}, H, \mc{S},\allowbreak \{\mc{A}_i\}_{i=1}^N,\allowbreak \{r_i\}_{i=1}^N, P)$, where (1) $\mc{N} = \{1,2,\dots,N\}$ is the set of agents; (2) $H\in\mb{N}_+$ is the number of time steps in each episode; (3) $\mc{S}$ is the finite state space; (4) $\mc{A}_i$ is the finite action space for agent $i\in\mc{N}$; (5) $r_i:[H]\times \mc{S}\times \aall \ra [0,1]$ is the reward function for agent $i$, where $\aall = \bigtimes_{i=1}^N \mc{A}_i$ is the joint action space; and (6) $P: [H]\times \mc{S}\times \aall\ra \Delta(\mc{S})$ is the state transition function. 
The agents interact in an unknown environment for $T$ episodes. 
At each step $h\in[H]$, the agents observe the state $s_h \in \mc{S}$ and take actions $a_{h,i} \in\mc{A}_i$ simultaneously.  
Agent $i$ then receives its reward $r_{h,i}(s_h,\bm{a}_h)$, where $\bm{a}_h =  (a_{h,1},\dots, a_{h,N}) \in\aall$, and the environment transitions to the next state $s_{h+1}\sim P_h(\cdot | s_h,\bm{a}_h)$.
We make a standard assumption \citep{jin2022v,songcan} that each episode starts from a fixed initial state $s_1$. 
Let $S = |\mc{S}|$, $A_i = |\mc{A}_i|$, and $A_{\max} = \max_{i\in\mc{N}}A_i$.

\noindent\textbf{Policy and value function.} A (Markov) policy $\pi_i\in\Pi_i:[H]\times \mc{S}\ra \Delta(\mc{A}_i)$ for agent $i\in\mc{N}$ is a mapping from the time index and state space to a distribution over its own action space. 
Each agent seeks to find a policy that maximizes its own cumulative reward. 
A joint, product policy $\pi = (\pi_1,\dots,\pi_N)\in\Pi$ induces a probability measure over the sequence of states and joint actions. 
We use the subscript $-i$ to denote the set of agents excluding agent $i$, i.e., $\mc{N}\backslash \{i\}$. 
We can rewrite $\pi = (\pi_i, \pi_{-i})$ using this convention. For a joint policy $\pi$, and for any $h\in[H]$, $s \in \mc{S}$, and $\bm{a}\in \aall$, we define the value function and Q-function for agent $i$ as 
\begingroup
\setlength{\abovedisplayskip}{4pt}
\setlength{\belowdisplayskip}{4pt}
\setlength{\abovedisplayshortskip}{4pt}
\setlength{\belowdisplayshortskip}{4pt}
\begin{equation}\label{eqn:value_fn}
\begin{aligned}
	\scriptsize
	V_{h,i}^\pi (s)= \ee \bigg[\hspace{-.3em} \sum_{h'=h}^{H} \hspace{-.3em}r_{h',i}(s_{h'},\bm{a}_{h'})| s_h = s \bigg],Q_{h,i}^\pi (s,\bm{a}) = \ee \bigg[ \hspace{-.3em}\sum_{h'=h}^{H}\hspace{-.3em}r_{h',i}(s_{h'},\bm{a}_{h'})| s_h = s, \bm{a}_h = \bm{a} \bigg].	
\end{aligned}
\end{equation}
\endgroup
For notational convenience, for any $V:\mcs\ra\rr$, we define $
\left[{P}_h V\right](s, \bm{a})\defeq\mathbb{E}_{s^{\prime} \sim {P}_h(\cdot \mid s, \bm{a})}\left[V\left(s^{\prime}\right)\right].
$
For an arbitrary Q-function $Q_{h,i}:\mcs\times\aall\ra\rr$, we write $[Q_{h,i}\pi_h](s) \defeq \inner{Q_{h,i}(s,\cdot), \pi_h(s,\cdot)}$ and $[Q_{h,i}\pi_{h,-i}](s,a_i) \defeq \inner{Q_{h,i}(s,a_i,\cdot), \pi_{h,-i}(s,\cdot)}$ for short.


\noindent\textbf{Full-information feedback.} Following \cite{zhang2022policy,yang2022t,erez2022regret}, we consider the full-information feedback setting where each agent can observe the expected rewards it would have received had it played any candidate action. 
In our formulation, this can be interpreted as an oracle from which each agent $i$ can query $[Q_{h,i}\pi_{h,-i}](s,a_i)$ for each candidate action $a_i\in\mca_i$ at any state $s\in\mcs$. 

\noindent\textbf{Correlated policy and (coarse) correlated equilibrium.} We define $\pi =\{\pi_{h}: \rr \times (\mc{S}\times \mc{A})^{h-1}\times\mc{S}\ra \Delta(\mc{A})\}_{h\in[H]}$ as a (non-Markov) \emph{correlated policy}. 
Specifically, 
the agents first sample the value of $z\in\rr$ from an underlying distribution using a common source of randomness (e.g., a common random seed), and then they can use $z$ to coordinate their choices of actions. 
Such a virtual coordinator is crucial for learning CCE/CE  as shown in the literature \citep{mao2022provably,songcan,jin2022v}, and is standard in decentralized learning \citep{bernstein2009policy,arabneydi2015reinforcement,mao2020information}.
For each $h\in[H]$, $\pi_{h}$ maps from $z\in\rr$ and a state-action history $(s_1, \bm{a}_1,\dots, s_{h-1},\bm{a}_{h-1}, s_h)$ to a distribution over the joint action space $\aall$. 
Unlike product policies, such action distributions in general cannot be factorized into independent probability distributions over the individual action spaces. 
For a correlated policy $\pi$, we let $\pi_i$ and $\pi_{-i}$ be the proper marginal distributions of $\pi$ whose outputs are restricted to $\Delta(\mc{A}_i)$ and $\Delta(\mc{A}_{-i})$, respectively. 
The value functions for non-Markov correlated policies at step $h=1$ are defined similarly as those for product policies \eqref{eqn:value_fn}.

For any correlated policy $\pi = (\pi_i,\pi_{-i})$, the best response value of agent $i$ is denoted by 
$
V_{1,i}^{\dagger, \pi_{-i}}(s_1)  \defeq \sup_{\pi_i^\dagger}  V_{1,i}^{\pi_i^{\dagger}, \pi_{-i}}(s_1), 
$
where the supremum is taken over all (non-Markov) policies of agent $i$ independent of the randomness of $\pi_{-i}$. 
A policy $\pi_i^{\dagger}$ is agent $i$'s best response to $\pi_{-i}$ if it achieves the supremum. 
Given the PPAD-hardness of Nash equilibria~\citep{daskalakis2009complexity}, people often study relaxed equilibrium concepts in general-sum games, such as CCE and CE. 
\begin{dfn} ($\epsilon$-CCE)
	For any $\epsilon>0$, a correlated policy $\pi=(\pi_i,\pi_{-i})$ is an $\epsilon$-approximate coarse correlated equilibrium if $V_{1,i}^{\pi_i,\pi_{-i}}(s_1)\geq V_{1,i}^{\dagger, \pi_{-i}}(s_1) - \epsilon,\forall i\in\mc{N}.$
\end{dfn}
To properly define a CE, we need to first specify the concept of a strategy modification. 
Formally, for agent $i$, a strategy modification $\phi_i = \{\phi_{h,i}^s:h\in[H],s\in\mc{S}\}$ is a set of mappings from agent $i$'s action space to itself, i.e., $\phi_{h,i}^s:\mc{A}_i\ra \mc{A}_i$. 
Given a strategy modification $\phi_i$, whenever a policy $\pi$ selects the joint action $\bm{a} = (a_{1},\dots,a_{N})$ at step $h$ and state $s$, the modified policy $\phi_i \diamond \pi$ will select $(a_{1},\dots,a_{i-1}, \phi_{h,i}^s(a_{i}),a_{i+1},\dots,a_{N})$ instead. 
Let $\Phi_i$ denote the set of all possible strategy modifications for agent $i$. 
A CE states that no agent has the incentive to deviate from a correlated policy $\pi$ by using any strategy modification. 
\begin{dfn}($\epsilon$-CE)
	For any $\epsilon>0$, a correlated policy $\pi$ is an $\epsilon$-approximate correlated equilibrium if 
	$
	V_{1,i}^{\pi}(s_1) \geq \max_{\phi_i\in \Phi_i} V_{1,i}^{\phi_i\diamond \pi}(s_1) - \epsilon,\forall i\in\mc{N}.
	$
\end{dfn}

\vspace{-1em}\section{Convergence to Correlated Equilibria}\label{sec:ce}

In this section, we present our optimistic follow-the-regularized-leader (OFTRL) algorithm for learning correlated equilibria in general-sum Markov games in Section~\ref{subsec:ce_algorithm}, and then establish its $\widetilde{O}(T^{-1})$ convergence in Section~\ref{subsec:ce_analysis}. 

\vspace{-.5em}\subsection{Algorithm}\label{subsec:ce_algorithm}
Algorithm~\ref{alg:oftrl} describes the OFTRL procedure run by agent $i\in\mcn$.
Since the algorithms run by all the agents are exactly symmetric, in the following, we only illustrate our algorithm using a single agent $i$ as an example. 
Algorithm~\ref{alg:oftrl} consists of three major components: The policy update step that computes the strategy for each matrix game, the value update step that updates the (Q-)value functions, and the policy output step that generates a CE policy.

\begin{algorithm}[!t]
	\textbf{Initialize:} $Q_{h,i}^0(s,\bm{a})\gets 0,\pi_{h,i}^0(s,a_i)\gets 1/A_i, \forall s\in\mcs,h\in[H],a_i,a_i'\in\mca_i,\bm{a}\in\aall$;
	
	\For{iteration $t\gets 1$ to $T$}
	{
		\textbf{Policy update:} 
		
		\For{action $a_i\in\mca_i$}
		{
			$\ell_{h,i}^{t,a_i}(s,a_i')\gets \sum_{j=1}^{t-1} w_j \pi_{h,i}^{j}(s, a_i) [Q_{h,i}^j \pi_{h,-i}^{j}](s,a_i') + w_t \pi_{h,i}^{t-1}(s, a_i)[Q_{h,i}^{t-1} \pi_{h,-i}^{t-1}](s,a_i')$;
			
			$q_{h,i}^{t, a_i}(s,\cdot) \gets \argmax_{\bm{x}\in\Delta(\mca_i)}\l\langle\bm{x}, \eta\ell_{h,i}^{t,a_i}(s,\cdot)/w_t\rangle - \mc{R}(\bm{x})\r, \forall s\in\mcs,h\in[H]$;
		}
		Find $\pi_{h,i}^t$ such that $\pi_{h,i}^t(s,\cdot) = \sum_{a_i\in\mca_i} \pi_{h,i}^t(s,a_i) q_{h,i}^{t,a_i}(s, \cdot), \forall s\in\mcs,h\in[H], a_i\in\mca_i$;
		
		\textbf{Value update:}  
		
		\For{$h\gets H$ to $1$}
		{
			$Q^{t}_{h,i}(s,\bm{a})\gets (1-\alpha_t) Q_{h,i}^{t-1} (s,\bm{a}) + \alpha_t \l r_{h,i} + P_h[Q_{h+1,i}^t\pi_{h+1}^t]\r(s,\bm{a}),\forall s\in\mcs,\bm{a}\in\aall$;
		}
	}
	\textbf{Output policy:} $\bar{\pi} = \bar{\pi}_1^T$, where $\bar{\pi}_h^t$ is defined in Algorithm~\ref{alg:certify}.
	
	\caption{Optimistic follow-the-regularized-leader for correlated equilibria (agent $i$)}\label{alg:oftrl}
\end{algorithm}

\noindent\textbf{Policy update.} At each fixed $(s,h)\in\mcs\times[H]$, the agents are essentially faced with a sequence of matrix games, where the payoff matrix for agent $i$ in the $t$-th matrix game is given by the estimated Q-function $Q_{h,i}^t(s, \cdot)$ at the corresponding iteration $t$.
For learning CE in matrix games, a folklore result suggests that each agent should employ a no-swap-regret learning algorithm. 
Specifically, suppose that each agent employs a no-swap-regret algorithm such that the cumulative swap regret up to time $T\in\nn_+$ is upper bounded by $\operatorname{SwapReg}^T$; then, the empirical distribution of the joint actions played by the players is an $(\operatorname{SwapReg}^T/T)$-approximate CE \citep{hart2000simple}. 

For a fixed matrix game at $(s,h)\times\mcs\times[H]$, we follow the generic reduction introduced in \citep{blum2007external} to obtain a no-swap-regret learning algorithm $\mathscr{A}_{\text{swap}}$ from a no-(external-)regret base algorithm $\mathscr{A}$. 
Specifically, \cite{blum2007external} maintain a separate no-regret algorithm $\mathscr{A}_a$ for each candidate action $a\in\mca_i$ of the agent. 
$\mathscr{A}_{\text{swap}}$ computes a strategy by combining the strategies of the $A_i$ base algorithms. 
At time step $t\in[T]$, each base algorithm $\mathscr{A}_a$ outputs a distribution $q^{t, a}(\cdot)\in\Delta(\mca_i)$, where $q^{t,a}(a')$ is the probability that it selects $a'\in\mca_i$. 
Then, a (row) stochastic matrix $q^t \in \rr^{A_i \times A_i}$ is constructed, where the $a$-th row of $q^t$ is equal to the $q^{t,a}$ vector. 
$\mathscr{A}_{\text{swap}}$ obtains the action selection strategy by computing a stationary distribution\footnote{It is known that such a distribution $\pi^t$ exists and is computationally efficient.} $\pi^{t}\in\Delta(\mca_i)$ of $q^t$ such that $(q^t)\T \pi^t = \pi^t$. 
Upon receiving the payoff vector $\bm{u}^t\in\rr^{A_i}$ (in the case of Algorithm~\ref{alg:oftrl}, $\bm{u}^t = [Q_{h,i}^j \pi_{h,-i}^j](s, \cdot)$ for agent $i$) from the environment, $\mathscr{A}_{\text{swap}}$ returns to each $\mathscr{A}_a$ base algorithm a $\pi^t(a)$ fraction of the received utility, so that $\mathscr{A}_a$ is updated with a utility vector of $\pi^t(a)\bm{u}^t \in\rr^{A_i}$. 
It is shown that $\mathscr{A}_{\text{swap}}$ guarantees no-swap-regret as long as each base algorithm $\mathscr{A}_a$ has sublinear (external) regret in $T$. 

In Algorithm~\ref{alg:oftrl}, we use weighted OFTRL as the no-regret base algorithm $\mathscr{A}$. 
OFTRL \citep{syrgkanis2015fast} extends the standard FTRL paradigm by maintaining a prediction sequence $\bm{m}^t$ of the utilities. 
Given a utility sequence $(\bm{u}^1, \dots,\bm{u}^T)$, OFTRL computes the strategies by
\begingroup
\setlength{\abovedisplayskip}{4pt}
\setlength{\belowdisplayskip}{4pt}
\setlength{\abovedisplayshortskip}{4pt}
\setlength{\belowdisplayshortskip}{4pt}
\begin{equation}\label{eqn:oftrl}
\bm{x}^t \defeq \argmax_{\bm{x}\in\Delta(\mca_i)} \bigg\{ \eta\Big\langle\bm{x},\bm{m}^t + \sum_{j=1}^{t-1} \bm{u}^j\Big\rangle - \mc{R}(\bm{x}) \bigg\},	
\end{equation}
\endgroup
where $\eta>0$ is the learning rate, and $\mc{R}$ is the regularizer. 
In Algorithm~\ref{alg:oftrl}, we instantiate \eqref{eqn:oftrl} with $\bm{m}^t = \bm{u}^{t-1}$ and the log-barrier regularizer $\mc{R}(\bm{x}) = -\sum_{a_i \in \mca_i} \log(\bm{x}[a_i])$. 
Such a log-barrier regularizer satisfies the self-concordant condition in \cite{anagnostides2022uncoupled}, which is used to establish the Regret bounded by Variation in Utilities (RVU) property \citep{syrgkanis2015fast} of the swap regret. 
Due to the time-varying learning rates in the value update step (to be discussed momentarily), we additionally use a weighted variant of OFTRL that considers a weighted sum over the utility sequence. 
The choice of the weights $\{w_j\}_{j\in[t]}$ will also be defined shortly. 
Combining the OFTRL base algorithm, the utility weights and the external-to-swap-regret reduction, we arrive at the policy update rule as presented in Algorithm~\ref{alg:oftrl}. 
With the \cite{blum2007external} reduction, we name our no-swap-regret algorithm BM-OFTRL. 

\noindent\textbf{Value update.} For any $(h, s, \bm{a})$, we update the Q-value estimates at each iteration in a Bellman manner using a weighted average of previous estimates. 
We perform incremental updates using the classic step size $\alpha_t = (H+1)/(H+t)$ proposed by \cite{jin2018q}. 
With this step size, the value update rule in Algorithm~\ref{alg:oftrl} effectively becomes:
\begingroup
\setlength{\abovedisplayskip}{4pt}
\setlength{\belowdisplayskip}{4pt}
\setlength{\abovedisplayshortskip}{4pt}
\setlength{\belowdisplayshortskip}{4pt}
\begin{equation}\label{eqn:value_updates}
	Q^{t}_{h,i}(s,\bm{a})= \sum_{j=1}^t \alpha_t^j \l r_{h,i} + P_h[Q_{h+1,i}^j\pi_{h+1}^j]\r(s,\bm{a}),\forall s\in\mcs,\bm{a}\in\aall,
\end{equation}
\endgroup
where $\alpha_t^j \defeq \alpha_j \prod_{j'=j+1}^t (1-\alpha_{j'})$ and $\alpha_t^t \defeq \alpha_t$. 
One can verify that $\sum_{j=1}^t \alpha_t^j = 1$. 
Given the time-varying weights $\alpha_t^j$, to ensure that our policy update step is no-swap-regret in the matrix games defined by the Q-value estimates, we define the weights of our weighted OFTRL procedure in Algorithm~\ref{alg:oftrl} to be $w_j \defeq \alpha_t^j / \alpha_t^1$ for any fixed $t\in[T]$. 

\begin{algorithm}[!t]
	\textbf{Input:} Policy trajectory $\{\pi_h^t\}_{h\in[H],t\in[T]}$ of Algorithm~\ref{alg:oftrl};
	
	\For{step $h'\gets h$ to $H$}
	{
		Sample $\tau\in[t]$ with probability $\pp(\tau=j) = \alpha_t^j$;
		
		Play policy $\pi_{h'}^\tau$ at step $h'$;
		
		Set $t \gets \tau$.
	}
	
	\caption{Policy $\bar{\pi}_h^t$}\label{alg:certify}
\end{algorithm}

\noindent\textbf{Policy output.} Our output policy $\bar{\pi}$ is a state-wise weighted average of the history policies, where the weights are again related to the step sizes $\alpha_t^j$. 
The construction of $\bar{\pi}$ is formally defined in Algorithm~\ref{alg:certify}, which is closely related to the ``certified policies'' from \cite{bai2020near}. 
Specifically, Algorithm~\ref{alg:certify} takes the policy trajectory $\{\pi_h^t\}_{h\in[H], t\in[T]}$ of Algorithm~\ref{alg:oftrl} as input. 
For each step $h\in[H]$, Algorithm~\ref{alg:certify} randomly samples a joint policy from the policy trajectory using the sampling probabilities $\alpha_t^j$ and let all the agents play this joint policy at the given step. 
Similar to \cite{songcan,mao2022improving,zhang2022policy}, the constructed policy $\bar{\pi}$ is a correlated policy because the agents implicitly use a common source of randomness to select the same history iteration.
We will show that the output policy constitutes an approximate CE. 
We also remark that existing results for learning Nash equilibria in two-player zero-sum Markov games \citep{zhang2022policy,yang2022t} do not require such a shared randomness and generally output Markov policies.

\vspace{-.5em}\subsection{Analysis}\label{subsec:ce_analysis}
In the following, we present the analysis of Algorithm~\ref{alg:oftrl}. 
We use the following notion of $\operatorname{CE-Gap}$ to measure the distance of a correlated policy to a CE:
\begingroup
\setlength{\abovedisplayskip}{6pt}
\setlength{\belowdisplayskip}{6pt}
\setlength{\abovedisplayshortskip}{6pt}
\setlength{\belowdisplayshortskip}{6pt}
\[
\operatorname{CE-Gap}(\pi) \defeq  \max_{i\in\mcn}\max_{\phi_i\in \Phi_i} \l V_{1,i}^{\phi_i\diamond \pi}(s_1) - V_{1,i}^{\pi}(s_1)\r, 
\] 
\endgroup
where recall that $\Phi_i$ is the set of strategy modifications for agent $i$. 
The following theorem states that Algorithm~\ref{alg:oftrl} finds an $\widetilde{O}(T^{-1})$-approximate CE in $T$ iterations. 
\begin{thm}\label{thm:main}
	If Algorithm~\ref{alg:oftrl} is run on an $N$-player episodic Markov game for $T$ iterations with a learning rate $\eta = \frac{1}{256 NH \sqrt{H\amax}}$, the output policy $\bar{\pi}$ satisfies:
	\begingroup
	\setlength{\abovedisplayskip}{5pt}
	\setlength{\belowdisplayskip}{5pt}
	\setlength{\abovedisplayshortskip}{5pt}
	\setlength{\belowdisplayshortskip}{5pt}
	\[
	\operatorname{CE-Gap}(\bar{\pi}) \leq \frac{6144 N H^{\frac{7}{2}} \amax^{\frac{5}{2}}\log T}{T}. 
	\]
	\endgroup
\end{thm}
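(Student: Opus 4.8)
The plan is to reduce the global $\operatorname{CE-Gap}$ to a horizon-weighted sum of per-state \emph{swap} regrets, to control each such regret through a Regret-bounded-by-Variation-in-Utilities (RVU) inequality, and then to exploit two structural facts — that swap regret is non-negative and that the value updates are smooth — in order to bootstrap the second-order path lengths of the dynamics to a quantity independent of $T$. First I would prove a value-decomposition (``certified policy'') lemma establishing that the value functions of the correlated policy $\bar{\pi}_h^t$ produced by Algorithm~\ref{alg:certify} coincide exactly with the weighted Q-estimates $Q_{h,i}^t$ of Algorithm~\ref{alg:oftrl}, since both are generated by the same weights $\alpha_t^j$. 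Combined with a performance-difference argument unrolled backward over $h=H,\dots,1$, this reduces the deviation benefit $V_{1,i}^{\phi_i\diamond\bar{\pi}}(s_1)-V_{1,i}^{\bar{\pi}}(s_1)$ to an accumulation over all $(h,s)$ of the weighted swap regret
\[
\operatorname{SwapReg}_{h,i}^T(s)\defeq\max_{\phi}\sum_{t=1}^T \alpha_T^t\big\langle \phi(\pi_{h,i}^t(s,\cdot))-\pi_{h,i}^t(s,\cdot),\,[Q_{h,i}^t\pi_{h,-i}^t](s,\cdot)\big\rangle
\]
incurred by BM-OFTRL in the matrix game at that state, with the horizon contributing a multiplicative factor of order $H$.

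Second, for a fixed $(h,s)$ I would establish the RVU property of the weighted swap regret of BM-OFTRL. The log-barrier regularizer is self-concordant, so the OFTRL base learners satisfy a local-norm RVU bound; the \cite{blum2007external} reduction preserves this property at the cost of a factor $A_i$. Writing $u^t\defeq[Q_{h,i}^t\pi_{h,-i}^t](s,\cdot)$ and suppressing the fixed state in $\pi_{h,i}^t$, this yields
\[
\operatorname{SwapReg}_{h,i}^T(s)\leq \frac{A_i\log T}{\eta}+\eta\sum_{t=1}^T \alpha_T^t\,\|u^t-u^{t-1}\|^2 - \frac{1}{C\eta}\sum_{t=1}^T \alpha_T^t\,\|\pi_{h,i}^t-\pi_{h,i}^{t-1}\|^2,
\]
where the $\log T$ term is the effective range of the log-barrier. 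The key observation is that the identity modification lies in the feasible set, so $\operatorname{SwapReg}_{h,i}^T(s)\geq 0$; rearranging the inequality then immediately controls the second-order \emph{path length} $\sum_t \alpha_T^t\|\pi_{h,i}^t-\pi_{h,i}^{t-1}\|^2$ by $\tfrac{CA_i\log T}{\eta^2}$ plus $C\eta^2$ times the utility-variation term.

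Third, I would close the recursion across the horizon. By the value-update rule \eqref{eqn:value_updates}, the utility variation $\|u^t-u^{t-1}\|$ splits into a contribution from the change in the opponents' marginals $\pi_{h,-i}^t$, which telescopes into the same-level path lengths of the $N-1$ other players, and a contribution from $Q_{h,i}^t-Q_{h,i}^{t-1}$, which the incremental update with step size $\alpha_t=(H+1)/(H+t)$ expresses through the next-step policy motion $\pi_{h+1}^t$. Hence the variation term at level $h$ is bounded by $O(N)$ times the weighted path lengths at levels $h$ and $h+1$. Substituting this back into the path-length bound from the previous step produces a system of inequalities over $h=H,\dots,1$; choosing $\eta=\tfrac{1}{256NH\sqrt{H\amax}}$ makes the self-referential coefficient $C\eta^2\cdot O(N)$ strictly less than one, so a backward induction in $h$ (with base case $Q_{H+1,i}\equiv 0$) yields that every weighted path length is $O(\text{poly}(H,\amax,N)\log T)$, and in particular independent of $T$ apart from the $\log T$ factor.

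Finally, feeding the bounded path lengths back into the RVU inequality bounds each weighted swap regret by $O(A_i\log T/\eta)$, and the decomposition of the first step converts this into $\operatorname{CE-Gap}(\bar{\pi})=\widetilde{O}(1/T)$: the weights $\alpha_T^t$ sum to one and satisfy $\sum_t(\alpha_T^t)^2=O(H/T)$, while a refined estimate of the weighted average (Lemma~4 of \cite{yang2022t}) is what turns a naive $\log^2 T$ into the single $\log T$ appearing in the statement, and tracking the constants through the $H$-fold accumulation produces the stated $H^{7/2}\amax^{5/2}$ dependence. The main obstacle is the third step: the utility variation at one level is genuinely coupled to the policy motion at the adjacent level through the Bellman backup, so the path-length inequalities cannot be solved in isolation, and the backward induction must be arranged so that the time-varying weights $\alpha_T^t$ and the contraction afforded by the choice of $\eta$ jointly prevent the constants from growing with $T$ or from exceeding the claimed polynomial dependence on $H$ and $\amax$.
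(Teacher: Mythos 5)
Your overall architecture coincides with the paper's: the certified-policy identity $Q_{h,i}^t(s,\bm{a}) = \big(r_{h,i} + P_h[V_{h+1,i}^{\bar{\pi}_{h+1}^t}]\big)(s,\bm{a})$ and the resulting recursion $\delta_h^t \le \sum_{j=1}^t \alpha_t^j\delta_{h+1}^j + \operatorname{SwapReg}_h^t$ (Lemma~\ref{lemma:recursion}), an RVU bound for weighted BM-OFTRL with the log-barrier regularizer (Lemmas~\ref{lemma:matrix_swap_regret} and~\ref{lemma:perstate_weighted_regret}), non-negativity of swap regret to control the second-order path lengths, and Lemma 4 of \cite{yang2022t} to resolve the weighted recursion with a single $\log T$ factor.

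However, your third step contains a genuine gap. You assert that the utility variation at step $h$ is ``genuinely coupled to the policy motion at the adjacent level,'' i.e.\ that $\|Q_{h,i}^t - Q_{h,i}^{t-1}\|_\infty$ must be expressed through the path lengths at level $h+1$, forcing a coupled system of path-length inequalities over $h=H,\dots,1$ resolved by a contraction argument. This is both unjustified and unnecessary. Unjustified, because $Q_{h,i}^t - Q_{h,i}^{t-1} = \alpha_t\big(r_{h,i} + P_h[Q_{h+1,i}^t\pi_{h+1}^t] - Q_{h,i}^{t-1}\big)$ measures the gap between the current Bellman target and the weighted average of \emph{all} past targets; it does not reduce to the single-step motion $\pi_{h+1}^t - \pi_{h+1}^{t-1}$, and telescoping over the entire history would be required to make your claimed ``$O(N)$ times the weighted path lengths at levels $h$ and $h+1$'' bound meaningful, with coefficients you never control. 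Unnecessary, because boundedness already gives $\|Q_{h,i}^t - Q_{h,i}^{t-1}\|_\infty \le \alpha_t H$ (both the Bellman target and $Q_{h,i}^{t-1}$ lie in $[0,H]$), so after weighting this term contributes only $O(\eta H^3/t)$ via $\sum_{j=1}^t \alpha_t^j(\alpha_j)^2 \le 3H/t$ (Lemma 6 of \cite{yang2022t}) --- a pure $O(1/t)$ term with no reference to level $h+1$. In the paper, the only coupling in the path-length control is across \emph{players} at the same $(s,h)$: summing \eqref{eqn:a1} over $i$ lets the negative terms absorb the opponents' contributions when $\eta \le \frac{1}{256NH\sqrt{H\amax}}$ (yielding \eqref{eqn:a0}), and non-negativity of the \emph{summed} swap regret then bounds the total path length; the only recursion over $h$ is in the value gaps of Lemma~\ref{lemma:recursion}, never in the path lengths. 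Your variant, which applies non-negativity player-by-player before this cross-player absorption, can be repaired by summing over players first, but as written the contraction bookkeeping that your step 3 hinges on is neither carried out nor needed.
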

Theorem~\ref{thm:main} improves the existing $\widetilde{O}(T^{-1/4})$ rate \citep{erez2022regret} of no-regret learning to CE in full-information Markov games. 
The parameter dependences in Theorem~\ref{thm:main} also match the best known rate for normal-form games \citep{anagnostides2022uncoupled}, except that Theorem~\ref{thm:main} introduces an additional $O(H^{\frac{7}{2}})$ dependence on the Markov game episode length. 
We remark that we make no effort to improve the constant factors in the bounds, which can certainly be tightened.

The proof structure of Theorem~\ref{thm:main} is conceptually similar to those for learning Nash equilibria in two-player zero-sum Markov games \citep{zhang2022policy,yang2022t} .
We first introduce a few notations to facilitate the proof.
For any $(s,h)\in\mcs\times[H]$, we define the per-state weighted swap regret up to iteration $t\in[T]$ in the corresponding matrix game as
\begingroup
\setlength{\abovedisplayskip}{5pt}
\setlength{\belowdisplayskip}{5pt}
\setlength{\abovedisplayshortskip}{5pt}
\setlength{\belowdisplayshortskip}{5pt}
\[
\begin{aligned}
	\operatorname{SwapReg}_{h,i}^t(s) &\defeq \max_{\phi_{h,i}^s:\mca_i\ra\mca_i}\sum_{j=1}^t \alpha_t^j\inner{\phi_{h,i}^s\diamond \pi_{h,i}^j(s,\cdot) - \pi_{h,i}^j(s,\cdot), [Q_{h,i}^j \pi_{h,-i}^j](s, \cdot)},\\
	\operatorname{SwapReg}_{h}^t &\defeq \max_{i\in\mcn}\max_{s\in\mcs}\operatorname{SwapReg}_{h,i}^t(s).
\end{aligned}
\]
For any $(h,t)\in[H]\times[T]$, we further define the best response CE value gap as
\[
\delta_h^t \defeq \max_{i\in\mcn}\max_{\phi_i}\max_{s\in\mcs} \l V_{h,i}^{\phi_i\diamond \bar{\pi}_h^t}(s) - V_{h,i}^{\bar{\pi}_h^t}(s)\r,
\]
\endgroup 
where $\bar{\pi}_h^t$ is defined in Algorithm~\ref{alg:certify} and we slightly abuse the notation $\phi_i$ to denote a strategy modification that is only effective starting from step $h$. 
By the definition of $\delta_h^t$ and $\bar{\pi}$, one can easily see that $\operatorname{CE-Gap}(\bar{\pi})= \operatorname{CE-Gap}(\bar{\pi}_1^T) \leq \delta_1^T$. 
To control $\delta_1^T$, we first use the following lemma to establish the recursive relationship of the best response CE value gaps between two consecutive steps $h$ and $h+1$:
\begin{lem}\label{lemma:recursion}
	(Recursion of best response CE value gaps) For any fixed $(h,t)\in[H]\times[T]$, we have
	\begingroup
	\setlength{\abovedisplayskip}{0pt}
	\setlength{\belowdisplayskip}{0pt}
	\setlength{\abovedisplayshortskip}{0pt}
	\setlength{\belowdisplayshortskip}{0pt}
	\begin{equation}\label{eqn:a2}
		\delta_h^t \leq \sum_{j=1}^t \alpha_t^j \delta_{h+1}^j + \operatorname{SwapReg}_h^t.
	\end{equation}
	\endgroup 
\end{lem}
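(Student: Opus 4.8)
The plan is to peel off a single step of the dynamic program: I will bound each step-$h$ quantity by the step-$(h+1)$ quantity $\delta_{h+1}$ plus a one-step matrix-game deviation cost, which is exactly the per-state swap regret. The backbone is a pair of identities expressing the certified-policy value $V_{h,i}^{\bar{\pi}_h^t}$ through the stored $Q$-estimates. First I would unroll Algorithm~\ref{alg:certify}: at step $h$ the policy $\bar{\pi}_h^t$ samples $\tau\in[t]$ with probability $\alpha_t^\tau$, plays the product policy $\pi_h^\tau$, and then continues according to $\bar{\pi}_{h+1}^\tau$. Conditioning on $\tau$ gives the recursion
\[
V_{h,i}^{\bar{\pi}_h^t}(s)=\sum_{\tau=1}^t \alpha_t^\tau\, \mathbb{E}_{\bm{a}\sim\pi_h^\tau(s,\cdot)}\Big[r_{h,i}(s,\bm{a})+[P_h V_{h+1,i}^{\bar{\pi}_{h+1}^\tau}](s,\bm{a})\Big].
\]

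Next I would prove, by backward induction on $h$ (with trivial base case $h=H+1$), the closed-form identity
\[
V_{h,i}^{\bar{\pi}_h^t}(s)=\sum_{\tau=1}^t \alpha_t^\tau\, [Q_{h,i}^\tau\pi_h^\tau](s).
\]
Assuming this at step $h+1$, the definition of the value update \eqref{eqn:value_updates} together with $\sum_{j\le\tau}\alpha_\tau^j=1$ and the linearity of $P_h$ yields the one-step Bellman identity $Q_{h,i}^\tau(s,\bm{a})=r_{h,i}(s,\bm{a})+[P_h V_{h+1,i}^{\bar{\pi}_{h+1}^\tau}](s,\bm{a})$. Substituting this into the recursion above closes the induction, and I will reuse the Bellman identity on the deviation side as well.

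For the deviation side I would expand $V_{h,i}^{\phi_i\diamond\bar{\pi}_h^t}(s)$ identically, noting that the modification replaces agent $i$'s action $a_i$ by $\phi_{h,i}^s(a_i)$ at step $h$ and continues with $\phi_i\diamond\bar{\pi}_{h+1}^\tau$. Bounding the modified continuation by the best-response gap, $V_{h+1,i}^{\phi_i\diamond\bar{\pi}_{h+1}^\tau}(s')\le V_{h+1,i}^{\bar{\pi}_{h+1}^\tau}(s')+\delta_{h+1}^\tau$ (this is exactly the definition of $\delta_{h+1}^\tau$, and $P_h$ integrates the constant $\delta_{h+1}^\tau$ to itself), and invoking the Bellman identity gives
\[
V_{h,i}^{\phi_i\diamond\bar{\pi}_h^t}(s)\le \sum_{\tau=1}^t \alpha_t^\tau\Big(\mathbb{E}_{\bm{a}\sim\pi_h^\tau(s,\cdot)}\big[Q_{h,i}^\tau(s,\phi_{h,i}^s\diamond\bm{a})\big]+\delta_{h+1}^\tau\Big).
\]
Because each $\pi_h^\tau$ is a product policy, the inner expectation factorizes into $\inner{\phi_{h,i}^s\diamond\pi_{h,i}^\tau(s,\cdot),[Q_{h,i}^\tau\pi_{h,-i}^\tau](s,\cdot)}$, while the identity for $V_{h,i}^{\bar{\pi}_h^t}$ rewrites the unmodified value as $\sum_\tau\alpha_t^\tau\inner{\pi_{h,i}^\tau(s,\cdot),[Q_{h,i}^\tau\pi_{h,-i}^\tau](s,\cdot)}$. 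Subtracting, the inner-product terms combine into precisely the summand defining $\operatorname{SwapReg}_{h,i}^t(s)$; maximizing over $\phi_{h,i}^s$, then over the remaining components of $\phi_i$, over $s$, and over $i\in\mcn$ yields \eqref{eqn:a2}.

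The main obstacle, and the only place requiring care, is aligning the two layers of weights: the outer sampling weights $\alpha_t^\tau$ of the certified policy against the inner weights $\alpha_\tau^j$ baked into each $Q_{h,i}^\tau$ through \eqref{eqn:value_updates}. The backward induction is what makes these telescope cleanly, and it is essential that $Q_{h,i}^\tau$ is assembled from the \emph{unmodified} continuation policies $\pi_{h+1}^j$, so that the entire discrepancy between the modified and unmodified continuations is absorbed into the single additive term $\sum_\tau\alpha_t^\tau\delta_{h+1}^\tau$. Everything else is bookkeeping over the product structure of $\pi_h^\tau$ and the order of the maximizations.
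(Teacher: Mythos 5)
Your proof is correct and follows essentially the same route as the paper's: both rest on the certified-policy value decomposition $V_{h,i}^{\bar{\pi}_h^t}(s)=\sum_{j=1}^t\alpha_t^j\big\langle\pi_{h,i}^j(s,\cdot),\big[\big(r_{h,i}+[P_hV_{h+1,i}^{\bar{\pi}_{h+1}^j}]\big)\pi_{h,-i}^j\big](s,\cdot)\big\rangle$ together with the backward-induction identity $Q_{h,i}^\tau(s,\bm{a})=\big(r_{h,i}+[P_hV_{h+1,i}^{\bar{\pi}_{h+1}^\tau}]\big)(s,\bm{a})$, and then split the deviation value into the per-state swap-regret term plus the $\alpha_t^j$-weighted continuation gaps $\delta_{h+1}^j$. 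The only cosmetic difference is that you run the induction on the closed-form value identity and extract the Bellman identity as a corollary (and absorb the continuation into $\delta_{h+1}^\tau$ before maximizing over $\phi_{h,i}^s$), whereas the paper inducts on the Q-identity directly and defers the maximizations to the end; the two organizations are interchangeable.
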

Therefore, upper bounding $\operatorname{CE-Gap}(\bar{\pi})$ breaks down to controlling the per-state weighted swap regrets  for every $(s,h)\in\mcs\times[H]$. 
We can further establish the upper bound of  $\operatorname{SwapReg}_{h,i}^t(s)$ in the next lemma. 
The proof of this lemma relies on an RVU bound for the swap regret of BM-OFTRL under time-varying learning rates in normal-form games. 
\begin{lem}\label{lemma:perstate_weighted_regret}
	(Per-state weighted swap regret bounds) For any $t\in[T], h\in[H], s\in\mcs$ and $i\in\mcn$, Algorithm~\ref{alg:oftrl} ensures that
	\small 
	\begingroup
	\setlength{\abovedisplayskip}{0pt}
	\setlength{\belowdisplayskip}{0pt}
	\setlength{\abovedisplayshortskip}{0pt}
	\setlength{\belowdisplayshortskip}{0pt}
	\begin{align}
		\operatorname{SwapReg}_{h,i}^t(s) \leq& \frac{4A_i^2 H\log t}{\eta t}  +  \frac{32\eta H^3N^2}{t} + 8\eta N H^2\sum_{j=2}^t\sum_{k\neq i}\alpha_t^j \Big\|\pi_{h,k}^j(s,\cdot) - \pi_{h,k}^{j-1}(s,\cdot)\Big\|_1^2.\label{eqn:a1}
	\end{align}
\endgroup
	\normalsize If $\eta \leq \frac{1}{256NH \sqrt{H\amax}}$, we further have \small
	\begingroup
	\setlength{\abovedisplayskip}{0pt}
	\setlength{\belowdisplayskip}{0pt}
	\setlength{\abovedisplayshortskip}{0pt}
	\setlength{\belowdisplayshortskip}{0pt}
	\begin{align}
		\sum_{i=1}^N\operatorname{SwapReg}_{h,i}^t(s) \leq &\frac{4 N\amax^2 H\log t}{\eta t}  + \frac{32\eta NH^2(N^2+H)}{t}\nonumber\\
		&- \frac{1}{2048\eta H }\sum_{i=1}^N\sum_{j=2}^t  \frac{\alpha_t^j}{A_i}\norm{\pi_{h,i}^{j}(s,\cdot) - \pi_{h,i}^{j-1}(s,\cdot)}_1^2.\label{eqn:a0}
	\end{align}
\endgroup
\end{lem}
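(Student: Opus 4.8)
The plan is to work entirely inside the fixed matrix game indexed by $(s,h)$ and reduce the weighted swap regret to a sum of weighted \emph{external} regrets of the $A_i$ log-barrier OFTRL base learners, to which a weighted RVU bound applies. By the \cite{blum2007external} fixed-point construction $\pi_{h,i}^{j}(s,\cdot)=(q^{j})^{\top}\pi_{h,i}^{j}(s,\cdot)$, the optimal modification $\phi_{h,i}^{s}$ decomposes action-by-action, so that $\operatorname{SwapReg}_{h,i}^{t}(s)=\sum_{a_i\in\mca_i}$ of the weighted external regret of base learner $\mathscr{A}_{a_i}$, whose utility stream is the scaled vector $\pi_{h,i}^{j}(s,a_i)\,[Q_{h,i}^{j}\pi_{h,-i}^{j}](s,\cdot)$ carried with weight $\alpha_t^{j}$. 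It therefore suffices to bound each base learner's weighted regret and sum over $a_i$.

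For \eqref{eqn:a1} I would invoke a weighted-OFTRL RVU bound for the self-concordant log-barrier regularizer, adapting the self-concordance analysis of \cite{anagnostides2022uncoupled} to the increasing, time-varying weights $w_j=\alpha_t^{j}/\alpha_t^{1}$ in the spirit of \cite{yang2022t}. Per base learner this produces a bound of the shape $(\mathrm{range})/\eta+\eta\cdot(\mathrm{utility\ variation})-\eta^{-1}\cdot(\mathrm{iterate\ stability})$. The range of the log-barrier over $\Delta(\mca_i)$ contributes $O(A_i\log t)$, which after the weight normalization $\alpha_t^{1}\le 2H/t$ (a standard step-size estimate) and summation over the $A_i$ base learners yields the leading $\tfrac{4A_i^2 H\log t}{\eta t}$ term. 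For the variation term I would crucially use that the matrix-game utility $\bm u^{j}=[Q_{h,i}^{j}\pi_{h,-i}^{j}](s,\cdot)$ is \emph{independent of agent $i$'s own strategy}, and split $\norm{\bm u^{j}-\bm u^{j-1}}_\infty\le \norm{Q_{h,i}^{j}-Q_{h,i}^{j-1}}_\infty+H\sum_{k\neq i}\norm{\pi_{h,k}^{j}(s,\cdot)-\pi_{h,k}^{j-1}(s,\cdot)}_1$. The $Q$-increment piece is $O(H\alpha_j)$ by the incremental value update; after squaring, weighting by $\alpha_t^{j}$, applying step-size sums such as $\sum_j\alpha_t^{j}\alpha_j^2=O(H^2/t)$, and accounting for the $N$-agent structure of the Bellman target $P_h[Q_{h+1,i}^{j}\pi_{h+1}^{j}]$, it contributes the $\tfrac{32\eta H^3N^2}{t}$ term; the opponent piece, squared via Cauchy–Schwarz over the $N-1$ opponents, contributes $8\eta NH^2\sum_{j,k\neq i}\alpha_t^{j}\norm{\cdots}_1^2$. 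Dropping the non-positive stability term gives \eqref{eqn:a1}.

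For \eqref{eqn:a0} I would instead \emph{retain} the negative stability term and sum the per-agent RVU bounds over $i\in\mcn$. The range terms sum to $\tfrac{4N\amax^2 H\log t}{\eta t}$ via $A_i\le\amax$. Re-indexing the opponent double sum, each marginal $\norm{\pi_{h,k}^{j}-\pi_{h,k}^{j-1}}_1^2$ is counted for the $N-1$ agents $i\neq k$, so the aggregated positive variation is $O(\eta N^2H^2)\sum_{i,j}\alpha_t^{j}\norm{\pi_{h,i}^{j}-\pi_{h,i}^{j-1}}_1^2$. The negative term, originally phrased in the base iterates $q^{j,a_i}$, is converted to the aggregate strategy through a stationary-distribution stability estimate of the form $\norm{\pi_{h,i}^{j}-\pi_{h,i}^{j-1}}_1^2\lesssim A_i\sum_{a_i}\pi_{h,i}^{j}(s,a_i)\norm{q^{j,a_i}-q^{j-1,a_i}}^2$, which is precisely what carries the $\tfrac{1}{A_i}$ factor (and, together with the weight normalization, the extra $\tfrac1H$) in the displayed negative term. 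Finally, imposing $\eta\le\tfrac{1}{256NH\sqrt{H\amax}}$ (so that $\eta^2\lesssim\tfrac{1}{N^2H^3\amax}$) makes the negative coefficient $\tfrac{c}{\eta H A_i}$ dominate the positive coefficient $O(\eta N^2H^2)$ for every $i$; the opponent variation is then fully absorbed, leaving the residual $-\tfrac{1}{2048\eta H}\tfrac{\alpha_t^{j}}{A_i}$ term and the combined $\tfrac{32\eta NH^2(N^2+H)}{t}$ constant, which is \eqref{eqn:a0}.

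The hard part will be the first step: establishing a clean weighted RVU bound for OFTRL under the log-barrier with the increasing weights $\alpha_t^{j}$. Unlike negative entropy, the log-barrier is unbounded on $\Delta(\mca_i)$, so the analysis must pass through local (Hessian-induced) norms and a multiplicative-stability argument that keeps consecutive iterates within a constant-factor neighborhood and hence bounded away from the simplex boundary; propagating this self-concordance argument through the weighting, and correctly tracking the conversion between the local-norm stability of the base iterates $q^{j,a_i}$ and the $\ell_1$ stability of the aggregated strategy $\pi_{h,i}^{j}$ with the right $A_i$ and $H$ dependence, is the delicate core of the proof.
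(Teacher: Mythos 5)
Your proposal follows essentially the same route as the paper's proof: the paper likewise reduces to the fixed $(s,h)$ matrix game, applies a weighted RVU swap-regret bound for log-barrier BM-OFTRL (packaged as Lemma~\ref{lemma:matrix_swap_regret}, a time-varying-learning-rate extension of Theorem 4.3 of \cite{anagnostides2022uncoupled}, instantiated with utilities $w_j[Q_{h,i}^j\pi_{h,-i}^j](s,\cdot)$, predictions $w_j[Q_{h,i}^{j-1}\pi_{h,-i}^{j-1}](s,\cdot)$, and rates $\eta_j=\eta/w_j$ where $w_j=\alpha_t^j/\alpha_t^1$), splits the utility variation into the $O(H\alpha_j)$ Q-increment and the opponents' policy variation via the product-to-marginal total-variation bound plus Cauchy--Schwarz, and then sums over agents so that the condition on $\eta$ lets the negative stability terms absorb the positive variation terms. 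Two bookkeeping caveats if you execute the plan: the paper needs the tight step-size estimate $\sum_{j=1}^t\alpha_t^j(\alpha_j)^2\le 3H/t$ (Lemma 6 of \cite{yang2022t}) rather than your $O(H^2/t)$, since the loose version yields an $O(\eta H^4/t)$ contribution that does not fit under the stated $32\eta H^3N^2/t$ term; and the $N^2$ in that term actually originates from the $j=1$ boundary term of the opponent-variation sum (bounded via $\alpha_t^1\le 1/t$), not from the Bellman target structure.
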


We note that there is a discrepancy between \eqref{eqn:a2} and \eqref{eqn:a0}. 
Specifically, \eqref{eqn:a2} requires an upper bound for the \emph{maximum} of the swap regrets over the agents while \eqref{eqn:a0} controls the \emph{sum} of them. 
This poses some additional challenges for learning NE (in zero-sum Markov games) or CCE in existing works \citep{zhang2022policy,yang2022t}, because some players may experience negative regret \citep{hsieh2021adaptive} and the sum of regrets in general does not upper bound the maximum individual regret of the players. 
For CE, however, we can take advantage of a seemingly trivial property that the swap regret is always non-negative.
This is in sharp contrast to the (external) regret and one can easily verify this property by letting all the strategy modifications $\phi_{h,i}^s$ in $\operatorname{SwapReg}_{h,i}^t(s)$ be identity mappings. 
In this case, the discrepancy will not impede us as we can easily upper bound the maximum \eqref{eqn:a2} by the sum \eqref{eqn:a0},
which already yields an $\widetilde{O}(t^{-1})$ convergence rate. Our proof of Theorem~\ref{thm:main} 
instead follows a different route that upper bounds the second-order path lengths of the learning dynamics, which leads to an improved rate in terms of the dependence on $N$. 

\vspace{-.5em}\section{Convergence to Coarse Correlated Equilibria}\label{sec:cce}
\subsection{Algorithm}\label{subsec:cce_algorithm}

\begin{algorithm}[!tbp]
	\textbf{Initialize:} $Q_{h,i}^1(s,\bm{a})\gets 0,\pi_{h,i}^0(s,a_i)\gets 1/A_i, \forall s\in\mcs,h\in[H],a_i,a_i'\in\mca_i,\bm{a}\in\aall$;
	
	Set stage index $\tau\gets 1$, $t_\tau^{\text{start}}\gets 1$, and $L_\tau\gets H$\;
	
	\For{iteration $t\gets 1$ to $T$}
	{
		\textbf{Policy update:}	For all $s\in\mcs,h\in[H]$, and $a_i\in\mca_i$,
		\vspace{-.8em}\[
		\begin{aligned}
			&\ell_{h,i}^{t}(s,a_i)\gets \sum_{t'=t_{\tau}^{\text{start}}}^{t-1} [Q_{h,i}^\tau \pi_{h,-i}^{t'}](s,a_i) + [Q_{h,i}^{\tau} \pi_{h,-i}^{t-1}](s,a_i);\\
			&\pi_{h,i}^{t}(s,\cdot) \gets \argmax_{\bm{x}\in\Delta(\mca_i)}\l\langle\bm{x}, \eta_\tau \ell_{h,i}^{t}(s,\cdot)/H \rangle - \mc{R}(\bm{x})\r;
		\end{aligned}
		\]	\vspace{-1.2em}	
		
		\If{$t-t_\tau^{\text{start}} + 1\geq  L_\tau$}
		{
			$t_\tau^{\text{end}}\gets t, t_{\tau+1}^{\text{start}}\gets t+1, L_{\tau+1}\gets \floor{(1+1/H)L_\tau}$;
			
			\textbf{Value update:} For each $h\in[H], s\in\mcs, \bm{a}\in\aall,i\in\mcn$:
			\vspace{-1em}$$
			Q^{\tau+1}_{h,i}(s,\bm{a})\gets \frac{1}{L_\tau}\sum_{t'=t_\tau^{\text{start}}}^{t_\tau^{\text{end}}}\l r_{h,i} + P_h[Q^\tau_{h+1,i} \pi_{h+1}^{t'}]\r(s,\bm{a});
			$$\vspace{-1.3em}
			
			$\tau\gets\tau+1$;\	$\pi_{h,i}^t(s,a_i)\gets 1/A_i, \forall s\in\mcs,h\in[H],a_i\in\mca_i$\;
		}
	}
	\textbf{Output policy:} Sample $t\sim \operatorname{Unif}([T])$. Output $\bar{\pi}\defeq \bar{\pi}_1^t$ where $\bar{\pi}_h^t$ is defined in Algorithm~\ref{alg:certify_cce}. 
	\caption{Stage-based OFTRL for coarse correlated equilibria (agent $i$)}\label{alg:oftrl_cce}
\end{algorithm}

Algorithm~\ref{alg:oftrl_cce} describes the stage-based OFTRL procedure run by agent $i\in\mcn$ for learning CCE.
Similar to Section~\ref{sec:ce}, Algorithm~\ref{alg:oftrl_cce} also consists of three components: policy update, value update, and policy output. 
The policy update step is standard OFTRL with a negative entropy regularizer, also known as the optimistic Hedge (see e.g., \cite{chen2020hedging}). 
Our policy output step, formally described in Algorithm~\ref{alg:certify_cce} (in Appendix~\ref{app:cce}), is conceptually similar to Algorithm~\ref{alg:certify} for CE. 

The value update step here is substantially different from that of Section~\ref{sec:ce}. 
Rather than performing incremental updates as in Algorithm~\ref{alg:oftrl}, we instead employ \emph{stage-based} value updates by dividing the total $T$ iterations into multiple stages and only updating the value estimates at the end of a stage. 
We use $\tau\in\nn_+$ to index the stages and use $L_\tau$ to denote the length (i.e., number of iterations) of the $\tau$-th stage. We set the lengths of the stages to grow exponentially at a rate of $(1+1/H)$ so that $L_{\tau+1} = \floor{(1+1/H)L_\tau}$. 
The exponential growth ensures that the total $T$ iterations can be covered by a small number of stages, while the $(1+1/H)$ growth rate guarantees that the value estimation error does not blow up during the $H$ steps of recursion.
Such a mechanism was initially proposed in single-agent RL \citep{zhang2020almost} and has later been advocated for creating a piece-wise stationary environment in MARL \citep{mao2022improving}. 
The benefit of using stage-based value updates here is that we only need to bound the per-state \emph{average} regret in the corresponding matrix games (in contrast to the weighted regret as in Section~\ref{sec:ce}), which allows us to easily apply existing regret analysis results for normal-form games.

\vspace{-.5em}\subsection{Analysis}\label{subsec:cce_analysis}
We use the notion of $\operatorname{CCE-Gap}$ to measure the distance of a correlated policy to a CCE:
$
\operatorname{CCE-Gap}(\pi)\allowbreak \defeq  \max_{i\in\mcn}( V_{1,i}^{\dagger,\pi_{-i}}(s_1) - V_{1,i}^{\pi}(s_1)), 
$ 
where the best response value $V_{1,i}^{\dagger,\pi_{-i}}(s)$ is defined in Section~\ref{sec:preliminaries}. 
The following theorem shows that Algorithm~\ref{alg:oftrl_cce} finds an $\widetilde{O}(T^{-1})$-approximate CCE in $T$ iterations. 
\begin{thm}\label{thm:cce}
	If Algorithm~\ref{alg:oftrl_cce} is run on an $N$-player episodic Markov game for $T$ iterations with a learning rate $\eta_\tau = \Theta(\frac{1}{N\log^4 L_\tau})$ in each stage $\tau$, then the output policy $\bar{\pi}$ satisfies:
	\begingroup
	\setlength{\abovedisplayskip}{3pt}
	\setlength{\belowdisplayskip}{3pt}
	\setlength{\abovedisplayshortskip}{3pt}
	\setlength{\belowdisplayshortskip}{3pt}
	\[
	\operatorname{CCE-Gap}(\bar{\pi}) = O\l\frac{N H^3 \log \amax \cdot \log^5  T}{T}\r.
	\]
	\endgroup
\end{thm}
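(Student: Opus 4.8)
The plan is to mirror the structure of the proof of Theorem~\ref{thm:main}, replacing swap regret with external regret and the incremental value updates with the stage-based scheme, which is exactly what lets me import the normal-form individual-regret machinery. First I would reduce $\operatorname{CCE-Gap}(\bar{\pi})$ to a best-response value gap at the first step. Since the output samples $t\sim\operatorname{Unif}([T])$, it suffices to control $\frac1T\sum_{t=1}^T\delta_1^t$, where $\delta_h^t\defeq\max_{i\in\mcn}\max_{s\in\mcs}\l V_{h,i}^{\dagger,\bar{\pi}_{h,-i}^t}(s)-V_{h,i}^{\bar{\pi}_h^t}(s)\r$ and $\bar{\pi}_h^t$ is the certified policy of Algorithm~\ref{alg:certify_cce}. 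I would then prove a recursion analogous to Lemma~\ref{lemma:recursion}: for $t$ in stage $\tau$, $\delta_h^t$ is bounded by a convex combination of next-step gaps $\delta_{h+1}^{\cdot}$ plus the per-state external regret accrued at step $h$ during stage $\tau$. The crucial structural observation is that within stage $\tau$ the estimate $Q_{h,i}^\tau$ is frozen, so all agents face a genuine \emph{stationary} repeated matrix game at each $(s,h)$; the CCE gap in such a game is governed by the \emph{average} external regret rather than the weighted swap regret of Section~\ref{sec:ce}.

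Next I would bound this per-stage average external regret. Because each stage is a stationary game on which all agents run optimistic Hedge (OFTRL with the negative-entropy regularizer), I can invoke the individual-regret guarantee of \cite{daskalakis2021near} directly. With the prescribed learning rate $\eta_\tau=\Theta(1/(N\log^4 L_\tau))$, the dominant $\log A_i/\eta_\tau$ term of the RVU bound yields an individual external regret of order $N\log \amax\,\log^4 L_\tau$ over the $L_\tau$ iterations of stage $\tau$ (measured on the $[0,1]$-normalized payoffs, consistent with the $\ell_{h,i}^t/H$ normalization in Algorithm~\ref{alg:oftrl_cce}). Dividing by $L_\tau$ gives a per-state average regret of $O(N\log \amax\log^4 L_\tau/L_\tau)$; restoring the $[0,H]$ scale of the Q-values multiplies this by $H$. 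This step is the payoff of the stage-based design: the piece-wise-stationary environment is precisely what makes the off-the-shelf normal-form analysis applicable, so no non-negativity trick (as needed for swap regret in Theorem~\ref{thm:main}) is required.

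I would then handle error propagation across stages. The estimate $Q_{h,i}^\tau$ driving stage $\tau$ was formed at the end of stage $\tau-1$ from stage-$(\tau-1)$ policies and $Q_{h+1,i}^{\tau-1}$, so it lags the current policies by one stage. I would bound this bias using the smoothness of optimistic Hedge (small path lengths $\|\pi_{h}^{t}-\pi_{h}^{t-1}\|_1$) together with the geometric stage growth $L_{\tau+1}=\floor{(1+1/H)L_\tau}$, which keeps consecutive stage lengths comparable. The point of the $(1+1/H)$ rate is that the per-step estimation error is amplified by at most $(1+1/H)$ at each of the $H$ Bellman backups, so the total amplification is $(1+1/H)^H\le e=O(1)$ and the bias does not blow up. The same geometric growth guarantees that covering $T$ iterations needs only $K=O(H\log T)$ stages, with $L_\tau\le T$ throughout.

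Finally I would assemble the pieces. Unrolling the recursion from $h=H$ down to $h=1$ contributes one factor of $H$ (the recursion depth), the $[0,H]$ value range contributes another, so each iteration $t$ in stage $\tau$ satisfies $\delta_1^t=O(H^2 N\log \amax\log^4 L_\tau/L_\tau)$, with the cross-stage bias absorbed into the $O(1)$ amplification above. Averaging over $t\in[T]$ turns the $1/L_\tau$ into a sum over stages, $\frac1T\sum_\tau L_\tau\cdot\frac{H^2 N\log \amax\log^4 L_\tau}{L_\tau}=\frac{H^2 N\log \amax}{T}\sum_\tau\log^4 L_\tau$, and bounding $\sum_\tau\log^4 L_\tau\le K\log^4 T=O(H\log^5 T)$ produces the claimed $O\l N H^3\log \amax\,\log^5 T/T\r$; the three factors of $H$ are the value range, the recursion depth, and the $O(H\log T)$ stage count, while the extra $\log T$ beyond $\log^4$ comes from that same stage count. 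I expect the main obstacle to be making the cross-stage bias control rigorous: one must show quantitatively that the one-stage lag in $Q^\tau$, compounded over $H$ Bellman steps, stays $O(1)$, which is exactly what the $(1+1/H)$ growth rate is engineered to deliver and where the delicate interplay between stage lengths, optimistic-Hedge path lengths, and the Bellman recursion must be tracked carefully.
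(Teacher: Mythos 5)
Your skeleton coincides with the paper's: reduce $\operatorname{CCE-Gap}(\bar{\pi})$ to the average over $t\in[T]$ of the best-response gaps of the certified policies $\bar{\pi}_1^t$, prove a step-$h$ recursion whose error term is a per-state \emph{average} external regret, bound that regret via Lemma~\ref{lemma:matrix_external_regret} (scaled by $H$ for the $[0,H]$ value range), and pay the $O(H\log T)$ stage count; your final accounting of the three $H$ factors and the fifth $\log T$ is exactly the paper's. The genuine gap is in how you would justify the recursion. You treat the one-stage lag of $Q_{h,i}^{\tau}$ behind the current play as a \emph{bias}, to be controlled by optimistic-Hedge path lengths $\|\pi_h^t-\pi_h^{t-1}\|_1$ compounded through $H$ Bellman backups, and you flag this as the main obstacle. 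In the paper's proof (Lemma~\ref{lemma:recursion_cce}) there is no bias term at all: because the certified policy $\bar{\pi}_h^t$ of Algorithm~\ref{alg:certify_cce} samples uniformly from the \emph{previous} stage $\tau(t)-1$ and then recurses (stepping back a further stage at each subsequent step $h$), a backward induction over $h$ --- the same argument as in Lemma~\ref{lemma:recursion} --- yields the exact identity $Q_{h,i}^{\tau}(s,\bm{a})=\big(r_{h,i}+[P_h V_{h+1,i}^{\bar{\pi}_{h+1}^{t}}]\big)(s,\bm{a})$ for every $t$ in stage $\tau$: the frozen Q-estimate \emph{is} the value of the certified policy, with no approximation error. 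This identity is what makes the linear term in the best-response decomposition equal exactly $\operatorname{Reg}_{h,i}^{\tau-1}(s)$ (note: the regret of stage $\tau-1$, not of stage $\tau$ as written in your recursion), and it also reveals the true role of the $(1+1/H)$ growth: unrolling the recursion steps the stage index back once per step $h$, so the regret denominators are $L_{\tau-1},\dots,L_{\tau-H}$, and one needs $L_{\tau}/L_{\tau-H}\leq(1+1/H)^{H}\leq 3$ --- an amplification of regret denominators, not of an estimation bias.

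Without this identity your recursion lemma is unproven, and the path-length substitute you propose would not rescue it. Algorithm~\ref{alg:oftrl_cce} resets every policy to uniform at each stage boundary, so consecutive-iterate closeness fails precisely where you need it (between the stage that produced $Q^{\tau}$ and the stage that uses it); moreover, even within a stage, per-iteration path-length bounds do not keep the \emph{aggregate} drift over the $L_\tau$ iterations small --- the joint policy can move by $\Omega(1)$ across a stage, so ``the policies that formed $Q^\tau$'' and ``the current policies'' need not be close in any useful sense. The resolution is not to compare these two collections of policies at all, but to define the output policy so that it \emph{replays} the previous stage, turning the comparison into an identity. With that piece in place, the remainder of your argument goes through essentially as you wrote it.
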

Theorem~\ref{thm:cce} improves the best-known rate of $\widetilde{O}(T^{-3/4})$  \citep{zhang2022policy} for OFTRL in general-sum Markov games. 
Compared to its counterpart $O(N\log\amax\cdot \log^4 T/T)$ in normal-form games \citep{daskalakis2021near}, Theorem~\ref{thm:cce} incurs an extra $O(\log T)$ factor due to the stage-based value estimates. 
The proof of Theorem~\ref{thm:cce} starts by showing a recursive relationship of the best response CCE value gaps between two consecutive steps $h$ and $h+1$. 
As a consequence of stage-based value updates,  $\operatorname{CCE-gap}(\bar{\pi})$ breaks down to the sum of the per-state average regret over the stages, which allows us to apply each player's individual (average) regret bound in NFGs \citep{daskalakis2021near} for each stage. 
The proof is then completed by upper bounding the total number of stages. 
We defer the complete proof of Theorem~\ref{thm:cce} to Appendix~\ref{app:cce} due to space limitations.

\vspace{-.5em}\section{Numerical Results}\vspace{-.2em}\label{sec:simulations}
\begin{figure}[!t]
	\centering
	\begin{minipage}{.5\textwidth}
		\centering
		\includegraphics[width=.75\linewidth]{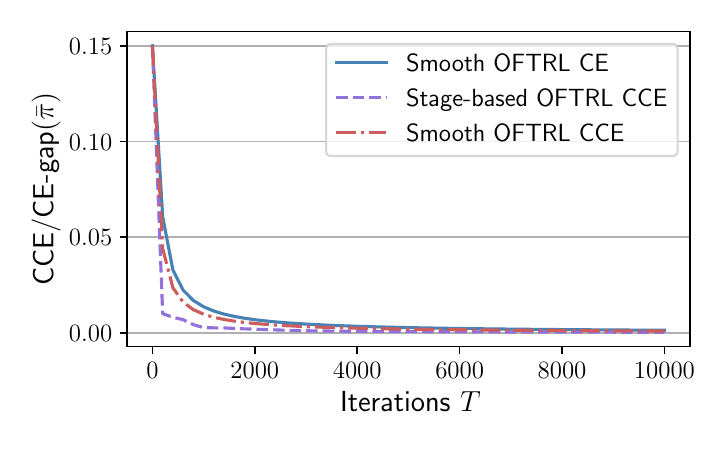}
		\captionof{figure}{\small Convergence of $\operatorname{CCE/CE-Gap}(\bar{\pi})$}
		\label{fig:1}
	\end{minipage}%
	\begin{minipage}{.5\textwidth}
		\centering
		\includegraphics[width=.75\linewidth]{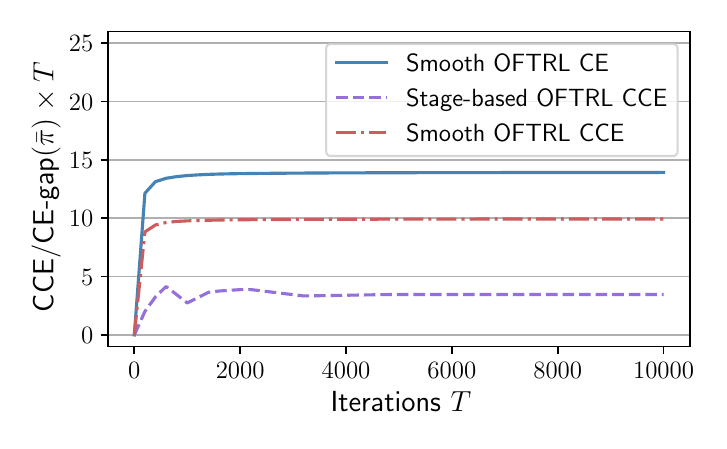}
		\captionof{figure}{\small Convergence of $\operatorname{CCE/CE-Gap}(\bar{\pi}) \times T$}
		\label{fig:2}
	\end{minipage}\vspace{-.5em}
\end{figure}
In this section, we numerically evaluate Algorithm~\ref{alg:oftrl} (denoted by ``Smooth OFTRL CE'') and Algorithm~\ref{alg:oftrl_cce} (``Stage-based OFTRL CCE'') to validate our $\widetilde{O}(T^{-1})$ theoretical convergence guarantees. 
Our simulations additionally consider an OFTRL algorithm with incremental value updates similar to that of Algorithm~\ref{alg:oftrl} for learning CCE (``Smooth OFTRL CCE''). 
We did not prove the convergence of such an algorithm but would be interested to see its numerical performance given its intuitive form. 
We conduct numerical studies on a simple general-sum Markov game with 2 players, 2 states, and 2 candidate actions for each player. 
Detailed definitions of the transition and reward functions of the game can be found in Appendix~\ref{app:simulations}. 
Figure~\ref{fig:1} illustrates the convergence of the three algorithms to their corresponding equilibrium solutions as the number of iterations increases. 
To clearly demonstrate their convergence rates, we further plot the behavior of $\operatorname{CCE/CE-Gap}(\bar{\pi}) \times T$ as $T$ increases.
We can observe from Figure~\ref{fig:2} that for all three algorithms, $\operatorname{CCE/CE-Gap}(\bar{\pi}) \times T$ essentially become a constant for any reasonably large value of $T$. 
This indicates that our algorithms indeed converge at a rate of $\widetilde{O}(T^{-1})$ numerically. 
We also observe that OFTRL with stage-based value updates numerically converges faster than its incrementally-updated counterpart despite using the same learning rate, which advocates the use of stage-based value updates in Markov games.

\section{Concluding Remarks}
In this paper, we have studied the fast convergence of no-regret learning in full-information general-sum Markov games and answer the open question of \cite{yang2022t} in the affirmative.
We have shown that within $T$ iterations, BM-OFTRL with smooth value updates finds an $\widetilde{O}(T^{-1})$-approximate CE, and OFTRL with stage-based value updates finds an $\widetilde{O}(T^{-1})$-approximate CCE, both of which match the best-known rates in normal-form games.
For future research, it would be interesting to investigate whether OFTRL with smooth value updates attains the same $\widetilde{O}(T^{-1})$ convergence to CCE as has been observed in our simulations. 
Another direction is to improve our rates in terms of the dependence on $H$ and $\log T$, or to prove any lower bounds for them.


\bibliography{ref}

\newpage
\appendix
\section{Technical Lemmas}\label{app:lemmas}

\begin{lem}\label{lemma:matrix_swap_regret}
	(Extension of Theorem 4.3 in \cite{anagnostides2022uncoupled} to time-varying learning rates) In a no-regret learning problem as defined in Section~\ref{sec:preliminaries}, suppose that BM-OFTRL \eqref{eqn:oftrl} is run with log-barrier regularization and a time-varying learning rate $\eta_t \leq \frac{1}{128\sqrt{\abs{\mca}}},\forall t\in[T]$. 
	Then, for any $T\geq 2$, the swap regret is bounded by
	\[
	\operatorname{SwapReg}^T \leq \frac{2\abs{\mca}^2\log T}{\eta_T} + 4\sum_{t=1}^T \eta_t \norm{\bm{u}^t - \bm{u}^{t-1}}_\infty^2 - \frac{1}{2048 \abs{\mca}} \sum_{t=1}^{T-1} \frac{1}{\eta_t}\norm{\bm{x}^{t+1}-\bm{x}^t}_1^2.
	\]
\end{lem}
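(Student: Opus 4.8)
The plan is to exploit the two-layer structure of BM-OFTRL: an outer Blum--Mansour reduction that writes the swap regret as a sum of external regrets of $\abs{\mca}$ base learners, and an inner optimistic-FTRL analysis for each base learner, carried out in the local norms of the log-barrier and extended to track the time-varying rate $\eta_t$.

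First I would invoke the Blum--Mansour decomposition. Write $\bm{x}^t$ for the stationary distribution of the row-stochastic matrix $q^t$ whose $a$-th row is the output $q^{t,a}$ of base learner $\mathscr{A}_a$. The fixed-point identity $(q^t)\T\bm{x}^t = \bm{x}^t$ gives $\sum_t\inner{\bm{x}^t,\bm{u}^t} = \sum_{a}\sum_t \bm{x}^t(a)\inner{q^{t,a},\bm{u}^t}$, and comparing against any row-stochastic modification matrix row by row yields $\operatorname{SwapReg}^T \leq \sum_{a\in\mca}\operatorname{Reg}_a^T$, where $\operatorname{Reg}_a^T$ is the external regret of $\mathscr{A}_a$ run on the rescaled utilities $\tilde{\bm{u}}^{t,a}\defeq\bm{x}^t(a)\bm{u}^t$ with prediction $\tilde{\bm{u}}^{t-1,a}$. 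The max over modifications decomposes across rows because each row of the modification matrix is an independent element of $\Delta(\mca)$.

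Next I would establish a per-learner RVU bound. Each $\mathscr{A}_a$ is OFTRL with the $1$-self-concordant log-barrier $\mc{R}$, so its regret is controlled in the local norms $\norm{\cdot}_{q^{t,a}}$ induced by $\nabla^2\mc{R}(q^{t,a})$. Extending the fixed-rate argument behind Theorem 4.3 of \cite{anagnostides2022uncoupled}, the time-varying rate contributes telescoping multiples of the regularizer range; bounding the log-barrier against a comparator at $\ell_\infty$-distance $\Theta(1/T)$ from the boundary caps that range at $O(\abs{\mca}\log T)$, and under a non-increasing schedule the telescoped sum is at most $\tfrac{2\abs{\mca}\log T}{\eta_T}$. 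The resulting inequality bounds $\operatorname{Reg}_a^T$ by this leading term, plus a positive local-variation term in $\tilde{\bm{u}}^{t,a}-\tilde{\bm{u}}^{t-1,a}$, minus a negative local path-length term in $q^{t+1,a}-q^{t,a}$. I would then split the rescaled-utility variation into a true utility-variation piece, which after summing over $a$ with $\sum_a\bm{x}^t(a)=1$ and boundedness of $\bm{u}^t$ produces the target $4\sum_t\eta_t\norm{\bm{u}^t-\bm{u}^{t-1}}_\infty^2$, and a stationary-increment piece $\abs{\bm{x}^t(a)-\bm{x}^{t-1}(a)}$ slated for absorption into the negative path-length term.

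Finally I would convert the base-learner path lengths into the stationary-distribution path length: the terms $\sum_a\tfrac{1}{\eta_t}\norm{q^{t+1,a}-q^{t,a}}^2$ must dominate $\tfrac{1}{\abs{\mca}\eta_t}\norm{\bm{x}^{t+1}-\bm{x}^t}_1^2$. This requires a perturbation bound for the Perron eigenvector of a Markov chain, showing that a change of the transition matrix by $\{q^{t+1,a}-q^{t,a}\}_a$ moves $\bm{x}^t$ by a comparable amount up to a factor $\abs{\mca}$. I expect this stationary-distribution stability, together with the bookkeeping ensuring the absorbed cross term stays below the converted negative term under the hypothesis $\eta_t\leq\tfrac{1}{128\sqrt{\abs{\mca}}}$, to be the main obstacle; this is precisely where the constants $\tfrac{1}{128}$ and $\tfrac{1}{2048}$ are tuned. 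Summing the per-learner bounds over the $\abs{\mca}$ learners multiplies the leading term by $\abs{\mca}$ to give $\tfrac{2\abs{\mca}^2\log T}{\eta_T}$, and after the conversion and absorption leaves exactly the claimed RVU inequality.
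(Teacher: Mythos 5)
Your proposal is correct and follows essentially the same route as the paper: the paper's own proof is only a brief sketch deferring to Theorem 4.3 of \cite{anagnostides2022uncoupled}, whose argument is exactly the pipeline you reconstruct — the Blum--Mansour decomposition of swap regret into external regrets of per-action base learners fed the rescaled utilities $\bm{x}^t(a)\bm{u}^t$, per-learner OFTRL/log-barrier RVU bounds in local norms re-derived under time-varying rates, and a stationary-distribution (Perron eigenvector) stability lemma to convert the base learners' path lengths into the path length of $\bm{x}^t$, with the cross terms absorbed via the smallness condition on $\eta_t$. Your explicit remark that the $1/\eta_T$ leading term needs a non-increasing rate schedule is a detail the paper leaves implicit, and it does hold in the paper's application since the weights $w_j$ there are non-decreasing.
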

\begin{proofsketch}
	The proof follows a similar procedure as that of Theorem 4.3 in \cite{anagnostides2022uncoupled}, except that we need to re-derive their Theorems B.1 and 3.1 under a time-varying learning rate. 
	We skip the proof here as such an extension is straightforward. 
\end{proofsketch}

\begin{lem}\label{lemma:matrix_external_regret}
	(Theorem 3.1 from \cite{daskalakis2021near}) In a normal-form game with $N$ players and $A_i$ actions for player $i\in[N]$, suppose that all the players run OFTRL for $T$ steps with negative entropy
	regularization and a learning rate $\eta = \Theta(\frac{1}{N\log^4 T})$.
	Then, there exists a constant $C>1$ such that the regret of player $i$ satisfies
	\[
		\operatorname{Reg}_i^T \leq C N \log A_i \cdot \log^4 T.
	\]
\end{lem}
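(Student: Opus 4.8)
The statement is the individual regret guarantee for optimistic Hedge (OFTRL with the negative-entropy regularizer and the one-step prediction $\bm{m}^t = \bm{u}^{t-1}$) in the smooth-game regime, and I would recover it by isolating the familiar RVU mechanism and then carefully controlling the resulting path lengths. The plan is to first establish the single-player RVU inequality: for the entropic regularizer with learning rate $\eta$, optimistic Hedge guarantees
\[
\operatorname{Reg}_i^T \leq \frac{\log A_i}{\eta} + \eta \sum_{t=2}^T \norm{\bm{u}_i^t - \bm{u}_i^{t-1}}_\infty^2 - \frac{1}{8\eta}\sum_{t=2}^T \norm{\bm{x}_i^t - \bm{x}_i^{t-1}}_1^2.
\]
This is the standard consequence of the optimistic FTRL stability analysis together with the $1$-strong convexity of the negative entropy with respect to $\norm{\cdot}_1$; I would derive it through the usual ``ghost-iterate'' comparison between $\bm{x}^t$ and the hypothetical point that is allowed to also see $\bm{u}^t$.

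The next step reduces utility variation to strategy variation. Since in a normal-form game player $i$'s utility vector $\bm{u}_i^t$ is multilinear in the opponents' strategies $\{\bm{x}_k^t\}_{k\neq i}$ with coefficients in $[0,1]$, a telescoping over the coordinates $k\neq i$ gives $\norm{\bm{u}_i^t - \bm{u}_i^{t-1}}_\infty \leq \sum_{k\neq i}\norm{\bm{x}_k^t - \bm{x}_k^{t-1}}_1$, hence by Cauchy--Schwarz $\norm{\bm{u}_i^t - \bm{u}_i^{t-1}}_\infty^2 \leq N \sum_{k\neq i}\norm{\bm{x}_k^t - \bm{x}_k^{t-1}}_1^2$. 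Summing the RVU bound over all players and taking $\eta = O(1/N)$ so that the positive utility-variation term is absorbed into the negative second-order path-length term, one obtains that \emph{both} the aggregate regret and the aggregate second-order path length $\sum_{t}\sum_i \norm{\bm{x}_i^t - \bm{x}_i^{t-1}}_1^2$ are controlled at the scale $O(N\log \amax/\eta)$.

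The crux, and the step I expect to be the main obstacle, is upgrading this aggregate control into a \emph{per-player, polylogarithmic} bound on $\sum_t \norm{\bm{x}_i^{t} - \bm{x}_i^{t-1}}_1^2$. The aggregate argument alone does not suffice: an individual player's regret may be negative, so the sum of regrets dominates no single one of them. Here I would follow the higher-order smoothness machinery of \cite{daskalakis2021near}, exploiting the multiplicative (logarithmic) stability of the Hedge iterates to bound the second-order path length of each player in terms of a \emph{higher}-order path length plus strictly lower-order residuals, and then iterating this self-bounding recursion. With $\eta = \Theta(1/(N\log^4 T))$, unrolling the recursion $O(\log T)$ times closes the loop and yields $\sum_t \norm{\bm{x}_i^{t} - \bm{x}_i^{t-1}}_1^2 = O(\log A_i \cdot \log^4 T)$ for every player, where the $\log^4 T$ factor is precisely the price of driving this recursion to termination.

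Finally, I would substitute the per-player path-length bound back into the single-player RVU inequality of the first step, using the utility-to-strategy reduction once more to dominate $\sum_t \norm{\bm{u}_i^t - \bm{u}_i^{t-1}}_\infty^2$ by the neighbors' path lengths, which are themselves polylogarithmic. With $\eta = \Theta(1/(N\log^4 T))$ the leading term $\log A_i/\eta = O(N\log A_i \cdot \log^4 T)$ sets the scale while the variation term contributes only a lower-order polylogarithmic correction, giving $\operatorname{Reg}_i^T \leq C N \log A_i \cdot \log^4 T$ for an absolute constant $C>1$, as claimed.
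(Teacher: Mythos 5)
The paper provides no proof of this lemma—it is imported verbatim as Theorem 3.1 of \cite{daskalakis2021near}—so your sketch, whose hardest step (the per-player polylogarithmic path-length bound) you yourself defer to the higher-order smoothness machinery of that same paper, amounts to essentially the same approach: reliance on the cited result. The scaffolding you supply is accurate and faithful to the cited proof's structure—the entropic RVU inequality for optimistic Hedge, the multilinearity reduction from utility variation to opponents' path lengths, and crucially the observation that aggregate RVU arguments cannot control an individual player's (possibly negative) regret, which is precisely why the higher-order machinery is needed.
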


\section{Proofs for Section~\ref{sec:ce}}\label{app:ce}
\noindent\textbf{Lemma \ref{lemma:recursion}.} 	(Recursion of best response CE value gaps) For any fixed $(h,t)\in[H]\times[T]$, we have
	\[
	\delta_h^t \leq \sum_{j=1}^t \alpha_t^j \delta_{h+1}^j + \operatorname{SwapReg}_h^t.
	\]
\begin{proof}
	For any fixed $i\in\mcn$ and $s\in\mcs$, we know from the definition of $\bar{\pi}_h^t$ from Algorithm~\ref{alg:certify} that
	\begin{equation}\label{eqn:a9}
		V_{h,i}^{\bar{\pi}_h^t}(s) = \sum_{j=1}^t \alpha_t^j \inner{\pi_{h,i}^j(s,\cdot), \L \l r_{h,i} + [P_hV_{h+1,i}^{\bar{\pi}_{h+1}^j}] \r \pi_{h,-i}^j \R(s,\cdot)}.
	\end{equation}		
	For a fixed $\bar{\pi}_h^t$, we use $\phi_i^\star$ to denote the best response strategy modification that maximizes the value function starting from step $h$. 
	In this case, we know from the definition of the value function that 
	\[
	\begin{aligned}
		& V_{h,i}^{\phi_i^\star \diamond \bar{\pi}_h^t}(s) =  \max_{\phi_{h,i}^s:\mca_i\ra\mca_i}\sum_{j=1}^t \alpha_t^j \inner{\phi_{h,i}^s\diamond \pi_{h,i}^j(s,\cdot), \L \l r_{h,i} + [P_hV_{h+1,i}^{\phi_i^\star\diamond \bar{\pi}_{h+1}^j}] \r \pi_{h,-i}^j \R(s,\cdot)}\\
		= & \max_{\phi_{h,i}^s}\sum_{j=1}^t \alpha_t^j \inner{\phi_{h,i}^s\diamond \pi_{h,i}^j(s,\cdot), \L \l r_{h,i} + [P_hV_{h+1,i}^{\bar{\pi}_{h+1}^j}] + [P_hV_{h+1,i}^{\phi_i^\star\diamond \bar{\pi}_{h+1}^j}] - [P_hV_{h+1,i}^{\bar{\pi}_{h+1}^j}] \r \pi_{h,-i}^j \R(s,\cdot)}\\
		\leq & \max_{\phi_{h,i}^s}\sum_{j=1}^t \alpha_t^j \Big\langle\phi_{h,i}^s\diamond \pi_{h,i}^j(s,\cdot), \Big[ \big( r_{h,i} + [P_hV_{h+1,i}^{\bar{\pi}_{h+1}^j}] \big) \pi_{h,-i}^j \Big](s,\cdot)\Big\rangle 
		+ \sum_{j=1}^t \alpha_t^j  \max_{s'\in\mcs}\Big( V_{h+1,i}^{\phi_i^\star\diamond \bar{\pi}_{h+1}^j} 
		- V_{h+1,i}^{\bar{\pi}_{h+1}^j} \Big)(s'). 
	\end{aligned}
	\]
	Subtracting \eqref{eqn:a9} from the above equation leads to:
	\begin{align}
		V_{h,i}^{\phi_i^\star \diamond \bar{\pi}_h^t}(s) -& V_{h,i}^{\bar{\pi}_h^t}(s) 
		\leq \sum_{j=1}^t \alpha_t^j  
		\max_{s'\in\mcs}\Big( V_{h+1,i}^{\phi_i^\star\diamond \bar{\pi}_{h+1}^j} (s')
		- V_{h+1,i}^{\bar{\pi}_{h+1}^j}(s') \Big) \nonumber\\
		& + \max_{\phi_{h,i}^s}\sum_{j=1}^t \alpha_t^j \Big\langle\phi_{h,i}^s\diamond \pi_{h,i}^j(s,\cdot) -  \pi_{h,i}^j(s,\cdot), \Big[ \big( r_{h,i} + [P_hV_{h+1,i}^{\bar{\pi}_{h+1}^j}] \big) \pi_{h,-i}^j \Big](s,\cdot)\Big\rangle. \label{eqn:a8}
	\end{align}
	In the following, we will show that \eqref{eqn:a8} is equal to $\operatorname{SwapReg}_{h,i}^t(s)$. 
	It suffices to show that $Q_{h,i}^t(s, \bm{a}) = \big(r_{h,i} + [P_hV_{h+1,i}^{\bar{\pi}_{h+1}^t}]\big)(s, \bm{a}), \forall t\in[T], \bm{a}\in\aall$. 
	We prove this claim by backward induction over $h\in[H]$. 
	Notice that the claim trivially holds for $h=H$ as $Q_{H,i}^t(s, \bm{a}) = r_{H,i}(s, \bm{a}),  \forall t\in[T], \bm{a}\in\aall$. 
	Suppose that the claim holds for $h$; then, for step $h-1$, we have that
	\[
	\begin{aligned}
		Q_{h-1,i}^t(s, \bm{a}) = &\sum_{j=1}^t \alpha_t^j \l r_{h-1,i} + P_{h-1}[Q_{h,i}^j\pi_{h}^j]\r(s,\bm{a})\\
		=&  r_{h-1,i}(s,\bm{a}) + P_{h-1}\Big[\sum_{j=1}^t \alpha_t^j Q_{h,i}^j\pi_{h}^j\Big](s,\bm{a})\\
		=& r_{h-1,i}(s,\bm{a}) + P_{h-1}\Big[\sum_{j=1}^t \alpha_t^j \Big(r_{h,i} + [P_{h}V_{h+1,i}^{\bar{\pi}_{h+1}^j}]\Big)\pi_{h}^j\Big](s,\bm{a})\\
		=& r_{h-1,i}(s,\bm{a}) + \Big[P_{h-1} V_{h,i}^{\bar{\pi}_h^t}\Big](s,\bm{a}), 
	\end{aligned}
	\]
	where the first step is by \eqref{eqn:value_updates}, the second step changes the order of summation, the third step uses the induction hypothesis, and the last step is due to \eqref{eqn:a9}. 
	This completes the proof of $Q_{h,i}^t(s, \bm{a}) = \big(r_{h,i} + [P_hV_{h+1,i}^{\bar{\pi}_{h+1}^t}]\big)(s, \bm{a})$.
	Substituting it back to \eqref{eqn:a8}, we obtain that 
	\[
	V_{h,i}^{\phi_i^\star \diamond \bar{\pi}_h^t}(s) - V_{h,i}^{\bar{\pi}_h^t}(s) 
	\leq \sum_{j=1}^t \alpha_t^j  
	\max_{s'\in\mcs}\Big( V_{h+1,i}^{\phi_i^\star\diamond \bar{\pi}_{h+1}^j} (s')
	- V_{h+1,i}^{\bar{\pi}_{h+1}^j}(s') \Big) + \operatorname{SwapReg}_{h,i}^t(s). 
	\]
	Since the above inequality holds for any $i\in\mcn$ and $s\in\mcs$, and since  $V_{h+1,i}^{\phi_i^\star\diamond \bar{\pi}_{h+1}^j} (s') \leq \max_{\phi_i}V_{h+1,i}^{\phi_i\diamond \bar{\pi}_{h+1}^j} (s') $ at step $h+1$, we can conclude that 
	\[
	\delta_h^t \leq \sum_{j=1}^t \alpha_t^j \delta_{h+1}^j + \operatorname{SwapReg}_h^t,
	\]
	This completes the proof of the recursive relationship of best response CE value gaps. 
\end{proof}

\noindent\textbf{Lemma \ref{lemma:perstate_weighted_regret}.} (Per-state weighted swap regret bounds) For any $t\in[T], h\in[H], s\in\mcs$ and $i\in\mcn$, Algorithm~\ref{alg:oftrl} ensures that
	\begin{align}
		\operatorname{SwapReg}_{h,i}^t(s) \leq& \frac{4A_i^2 H\log t}{\eta t}  +  \frac{32\eta H^3N^2}{t} + 8\eta N H^2\sum_{j=2}^t\sum_{k\in\mcn, k\neq i}\alpha_t^j \norm{\pi_{h,k}^j(s,\cdot) - \pi_{h,k}^{j-1}(s,\cdot)}_1^2\nonumber\\
		 &- \frac{1}{2048\eta A_i}\sum_{j=2}^t \alpha_t^{j-1}\norm{\pi_{h,i}^j(s,\cdot) - \pi_{h,i}^{j-1}(s,\cdot)}_1^2.\nonumber
	\end{align}
\noindent Consequently, if $\eta \leq \frac{1}{256NH \sqrt{H\amax}}$, we further have
	\[
	\begin{aligned}
		\sum_{i=1}^N\operatorname{SwapReg}_{h,i}^t(s) \leq &\frac{4 N\amax^2 H\log t}{\eta t}  + \frac{32\eta NH^2(N^2+H)}{t}\\
		&- \frac{1}{2048\eta H }\sum_{i=1}^N\sum_{j=2}^t  \frac{\alpha_t^j}{A_i}\norm{\pi_{h,i}^{j}(s,\cdot) - \pi_{h,i}^{j-1}(s,\cdot)}_1^2.
	\end{aligned}
	\]
\begin{proof}
	At each fixed $(s,h)\in\mcs\times[H]$, the agents essentially face a no-swap-regret learning problem in a matrix game, where the payoff matrix of agent $i$ is $Q_{h,i}^t(s,\cdot)$ at iteration $t$. 
	We can apply the weighted swap regret bound (Lemma~\ref{lemma:matrix_swap_regret} in Appendix~\ref{app:lemmas}) of OFTRL under the Blum-Mansour reduction in normal-form games to obtain:
	\begin{align}
		\operatorname{SwapReg}_{h,i}^t(s) 
		=& \max_{\phi_{h,i}^s:\mca_i\ra\mca_i}\sum_{j=1}^t \alpha_t^j\inner{\phi_{h,i}^s\diamond \pi_{h,i}^j(s,\cdot) - \pi_{h,i}^j(s,\cdot), [Q_{h,i}^j \pi_{h,-i}^j](s, \cdot)}\nonumber\\
		=& \alpha_t^1 \max_{\phi_{h,i}^s:\mca_i\ra\mca_i} \sum_{j=1}^t \inner{\phi_{h,i}^s\diamond \pi_{h,i}^j(s,\cdot) - \pi_{h,i}^j(s,\cdot), w_j[Q_{h,i}^j \pi_{h,-i}^j](s, \cdot)}\label{eqn:a7}\\
		\leq & \frac{2A_i^2\alpha_t \log t}{\eta} + 4\sum_{j=1}^t \frac{\eta \alpha_t^1}{w_j}\norm{w_j[Q_{h,i}^j \pi_{h,-i}^j](s, \cdot) - w_{j}[Q_{h,i}^{j-1} \pi_{h,-i}^{j-1}](s, \cdot)}_\infty^2 \nonumber\\
		& - \frac{\alpha_t^1}{2048\eta A_i}\sum_{j=2}^t w_{j-1}\norm{\pi_{h,i}^j(s,\cdot) - \pi_{h,i}^{j-1}(s,\cdot)}_1^2,\label{eqn:a6}
	\end{align}
	where \eqref{eqn:a7} is due to the choice of the weights $w_j = \alpha_t^j/\alpha_t^1$.
	\eqref{eqn:a6} uses Lemma~\ref{lemma:matrix_swap_regret} from Appendix~\ref{app:lemmas}, by instantiating $\bm{u}^j(\cdot)$ in Lemma~\ref{lemma:matrix_swap_regret} as $w_j[Q_{h,i}^j\pi_{h,-i}^j](s, \cdot)$, the prediction $\bm{m}^t=w_j[Q_{h,i}^{j-1}\pi^{j-1}_{h,-i}](s, \cdot)$, and the learning rate $\eta_j=\eta/w_j$. 
	To further upper bound the above equation, notice that
	\begin{align}
		&\sum_{j=1}^t \frac{\eta \alpha_t^1}{w_j}\norm{w_j[Q_{h,i}^j \pi_{h,-i}^j](s, \cdot) - w_{j}[Q_{h,i}^{j-1} \pi_{h,-i}^{j-1}](s, \cdot)}_\infty^2\nonumber\\
		=& \sum_{j=1}^t \eta \alpha_t^1w_j\norm{\l [Q_{h,i}^j \pi_{h,-i}^j] - [Q_{h,i}^{j-1} \pi_{h,-i}^j] + [Q_{h,i}^{j-1} \pi_{h,-i}^j] -[Q_{h,i}^{j-1} \pi_{h,-i}^{j-1}]\r (s, \cdot)}_\infty^2\nonumber\\
		\leq & 2\sum_{j=1}^t \eta \alpha_t^1w_j\l \norm{Q_{h,i}^{j}(s,\cdot) - Q_{h,i}^{j-1}(s,\cdot)}_\infty^2 + H^2\norm{\pi_{h,-i}^{j}(s,\cdot) - \pi_{h,-i}^{j-1}(s,\cdot)}_1^2 \r\nonumber\\
		\leq & 2\sum_{j=1}^t \eta \alpha_t^1w_j (\alpha_j)^2H^2 + 2\sum_{j=1}^t \eta \alpha_t^1w_j H^2\norm{\pi_{h,-i}^{j}(s,\cdot) - \pi_{h,-i}^{j-1}(s,\cdot)}_1^2,  \label{eqn:a5}
	\end{align}
	where the second step uses the observation that $(a+b)^2\leq 2a^2+2b^2$, the H\"{o}lder's inequality, and the fact that $\|Q_{h,i}^{j-1}\|_\infty\leq H$. 
	The third step is due to our value update rule in Algorithm~\ref{alg:oftrl}, which yields
	\[
	\begin{aligned}
		\norm{Q_{h,i}^{j}(s,\cdot) - Q_{h,i}^{j-1}(s,\cdot)}_\infty = & 
		\norm{-\alpha_j Q_{h,i}^{j-1} (s,\cdot) + \alpha_j \l r_{h,i} + P_h[Q_{h+1,i}^j\pi_{h+1}^j]\r(s,\cdot)}_\infty\\
		\leq & \alpha_j \max\left\{\norm{Q_{h,i}^{j-1} (s,\cdot)}_\infty, \norm{\l r_{h,i} + P_h[Q_{h+1,i}^j\pi_{h+1}^j]\r(s,\cdot)}_\infty\right\}\\
		\leq & \alpha_j H.
	\end{aligned}
	\]
	To continue from \eqref{eqn:a5}, we apply the properties that $w_j = \alpha_t^j/\alpha_t^1$ and $\sum_{j=1}^t\alpha_t^j (\alpha_j)^2\leq \sum_{j=1}^t (\alpha_j)^2/t \leq (H+2)/t \leq 3H/t$ (see Lemma 6 in \cite{yang2022t} for a proof) to obtain:
	\begin{align}
		\eqref{eqn:a5} =& 2\sum_{j=1}^t \eta \alpha_t^1w_j (\alpha_j)^2H^2 + 2\sum_{j=1}^t \eta \alpha_t^1w_j H^2\norm{\pi_{h,-i}^{j}(s,\cdot) - \pi_{h,-i}^{j-1}(s,\cdot)}_1^2\nonumber\\
		\leq &\frac{6\eta H^3}{t} + 2\sum_{j=1}^t \eta \alpha_t^1w_j H^2\norm{\pi_{h,-i}^{j}(s,\cdot) - \pi_{h,-i}^{j-1}(s,\cdot)}_1^2\nonumber\\
		\leq & \frac{6\eta H^3}{t} +2\eta  (N-1)H^2\sum_{j=1}^t  \alpha_t^j \sum_{k\in\mcn, k\neq i}\norm{\pi_{h,k}^{j}(s,\cdot) - \pi_{h,k}^{j-1}(s,\cdot)}_1^2.\label{eqn:a4}
	\end{align}
	In the last step, we used that the total variation between two product distributions is bounded by the sum of the total variations of each marginal distribution (see e.g.~\citep{hoeffding1958distinguishability}):
	\[
	\begin{aligned}
		\norm{\pi_{h,-i}^{j}(s,\cdot) - \pi_{h,-i}^{j-1}(s,\cdot)}_1^2 = &\bigg( \sum_{\bm{a}_{-i}\in\mca_{-i}} \abs{\pi_{h,-i}^{j}(s,\bm{a}_{-i}) - \pi_{h,-i}^{j-1}(s,\bm{a}_{-i})} \bigg)^2\\
		=& \bigg( \sum_{\bm{a}_{-i}\in\mca_{-i}} \bigg|\prod_{k\neq i}\pi_{h,k}^{j}(s, a_{k}) - \prod_{k\neq i} \pi_{h,k}^{j-1}(s, a_{k})\bigg| \bigg)^2\\
		\leq & \bigg( \sum_{k\neq i}\norm{\pi_{h,k}^j(s, a_k) - \pi_{h,k}^{j-1}(s, a_k)}_1 \bigg)^2\\
		\leq & (N-1) \sum_{k\neq i}\norm{\pi_{h,k}^j(s, a_k) - \pi_{h,k}^{j-1}(s, a_k)}_1^2,
	\end{aligned}
	\]
	and the last step is by the Cauchy–Schwarz inequality. 
	Substituting \eqref{eqn:a4} back to \eqref{eqn:a6} leads to
	\begin{align}
		\operatorname{SwapReg}_{h,i}^t(s) \leq& \frac{2A_i^2\alpha_t \log t}{\eta} +8\eta  (N-1)H^2\sum_{j=1}^t  \alpha_t^j \sum_{k\in\mcn, k\neq i}\norm{\pi_{h,k}^{j}(s,\cdot) - \pi_{h,k}^{j-1}(s,\cdot)}_1^2 \nonumber\\
		&+ \frac{24\eta H^3}{t}  - \frac{\alpha_t^1}{2048\eta A_i}\sum_{j=2}^t w_{j-1}\norm{\pi_{h,i}^j(s,\cdot) - \pi_{h,i}^{j-1}(s,\cdot)}_1^2 \nonumber\\
		\leq & \frac{4A_i^2 H\log t}{\eta t}  +8\eta  (N-1)H^2\sum_{j=2}^t  \alpha_t^j \sum_{k\in\mcn, k\neq i}\norm{\pi_{h,k}^{j}(s,\cdot) - \pi_{h,k}^{j-1}(s,\cdot)}_1^2 \nonumber\\
		&+ \frac{32\eta H^2(H+N^2)}{t}  - \frac{1}{2048\eta A_i}\sum_{j=2}^t \alpha_t^{j-1}\norm{\pi_{h,i}^j(s,\cdot) - \pi_{h,i}^{j-1}(s,\cdot)}_1^2,	\label{eqn:a3}
	\end{align}
	where the second inequality uses $\alpha_t = (H+1)/(H+t)\leq 2H/t$. This step also takes out the term for $j=1$ and upper bounds it by
	\[
	8\eta  (N-1)H^2 \alpha_t^1 \sum_{k\in\mcn, k\neq i}\norm{\pi_{h,k}^{1}(s,\cdot) - \pi_{h,k}^{0}(s,\cdot)}_1^2 \leq \frac{32\eta (N-1)^2 H^2}{t}, 
	\]
	using the fact that $\alpha_t^1 \leq 1/t$ (Lemma 6 in \cite{yang2022t}). 
	This proves the first claim in the lemma. 
	To further establish the second statement, we sum over \eqref{eqn:a3} to obtain
	\[
	\begin{aligned}
		\sum_{i=1}^N\operatorname{SwapReg}_{h,i}^t(s) \leq & \frac{4 NA_i^2 H\log t}{\eta t}  +8\eta  (N-1)^2H^2\sum_{i=1}^N\sum_{j=2}^t    \alpha_t^j \norm{\pi_{h,i}^{j}(s,\cdot) - \pi_{h,i}^{j-1}(s,\cdot)}_1^2 \\
		&+ \frac{32\eta NH^2(H+N^2)}{t}  - \frac{1}{2048\eta }\sum_{i=1}^N\sum_{j=2}^t \frac{\alpha_t^{j-1}}{A_i}\norm{\pi_{h,i}^j(s,\cdot) - \pi_{h,i}^{j-1}(s,\cdot)}_1^2\\
		\leq & \frac{4 NA_i^2 H\log t}{\eta t}  + \frac{32\eta NH^2(H+N^2)}{t}\\
		& + \sum_{i=1}^N\sum_{j=2}^t   \l 8\eta  (N-1)^2H^2  - \frac{1}{2048\eta H A_i }\r\alpha_t^j\norm{\pi_{h,i}^{j}(s,\cdot) - \pi_{h,i}^{j-1}(s,\cdot)}_1^2 \\
		\leq & \frac{4 NA_i^2 H\log t}{\eta t}+ \frac{32\eta NH^2(H+N^2)}{t}\\
		&- \frac{1}{2048\eta H }\sum_{i=1}^N\sum_{j=2}^t  \frac{\alpha_t^j}{A_i}\norm{\pi_{h,i}^{j}(s,\cdot) - \pi_{h,i}^{j-1}(s,\cdot)}_1^2,
	\end{aligned}
	\]
	where the second step uses the fact that $\alpha_t^{j-1}/\alpha_t^j = (j-1)/(H+j-1) \geq 1/H$, and the last step is due to the condition that $\eta \leq\frac{1}{256NH \sqrt{H\amax}}$. 
\end{proof}

\noindent\textbf{Theorem \ref{thm:main}.}
	If Algorithm~\ref{alg:oftrl} is run on an $N$-player episodic Markov game for $T$ iterations with a learning rate $\eta = \frac{1}{256 NH \sqrt{H\amax}}$, the output policy $\bar{\pi}$ satisfies:
	\[
	\operatorname{CE-Gap}(\bar{\pi}) \leq \frac{6144 N H^{\frac{7}{2}} \amax^{\frac{5}{2}}\log T}{T}. 
	\]
\begin{proof}
	Using \eqref{eqn:a0} from Lemma~\ref{lemma:perstate_weighted_regret}, we upper bound the second-order path lengths by
	\[
	\begin{aligned}
		&8\eta NH^2\sum_{i=1}^N\sum_{j=2}^t  \alpha_t^j\norm{\pi_{h,i}^{j}(s,\cdot) - \pi_{h,i}^{j-1}(s,\cdot)}_1^2 \\
		\leq &8\eta NH^2\cdot 2048\eta H \amax\l \frac{4 N\amax^2 H\log t}{\eta t}  + \frac{32\eta NH^2(N^2+H)}{t}\r,
	\end{aligned}
	\]
	where we used the crucial fact that the swap regret is non-negative. 
	Substituting the above equation back to \eqref{eqn:a1} yields
	\begin{align}
		\operatorname{SwapReg}_{h,i}^t(s)  \leq &\frac{4A_i^2 H\log t}{\eta t}  +  \frac{32\eta H^3N^2}{t} + \frac{2^{16}\eta N^2 H^4 \amax^3\log t}{t} + \frac{2^{19}\eta^3 N^4 H^6}{t}\nonumber\\
		\leq & \frac{2048N H^{\frac{5}{2}} \amax^{\frac{5}{2}} \log t}{t},\label{eqn:b8}
	\end{align}
	where the second step uses $\eta = \frac{1}{256NH \sqrt{H\amax}}$. 
	Since \eqref{eqn:b8} holds for any $i\in\mcn$ and $s\in\mcs$, we can apply it back to the recursion of best response CE value gaps from Lemma~\ref{lemma:recursion} to obtain
	\[
	\delta_h^t \leq \sum_{j=1}^t \alpha_t^j \delta_{h+1}^j + \frac{2048N H^{\frac{5}{2}} \amax^{\frac{5}{2}} \log t}{t}.
	\]
	Starting from $\delta_{H+1}^t = 0$, we can show via backward induction that for any $(h,t)\in[H]\times[T]$, 
	\[
	\delta_h^t \leq \frac{6144 N \amax^{\frac{5}{2}}(H-h+1)H^{\frac{5}{2}}\log t}{t},
	\]
	where we applied Lemma 4 from \cite{yang2022t} that $\sum_{j=1}^t \alpha_t^j/j\leq (1+\frac{1}{H})\frac{1}{t}$. 
	We conclude the proof of the theorem by referring to the property that $\operatorname{CE-Gap}(\bar{\pi})\leq \delta_1^T$. 
\end{proof}

\section{Supplementary Material for Section~\ref{sec:cce}}\label{app:cce}
\subsection{Policy Output Algorithm}

\begin{algorithm}[!h]
	\textbf{Input:} Policy trajectory $\{\pi_h^t\}_{h\in[H],t\in[T]}$ of Algorithm~\ref{alg:oftrl_cce};
	
	\For{step $h'\gets h$ to $H$}
	{
		Uniformly sample $j$ from $\{t_{\tau(t)-1}^{\text{start}}, t_{\tau(t)-1}^{\text{start}}+1,\dots, t_{\tau(t)-1}^{\text{end}}\}$;
		
		Play policy $\pi_{h'}^j$ for step $h'$;
		
		Set $t\gets j$.
	}
	\caption{Policy $\bar{\pi}_h^t$ for stage-based OFTRL}\label{alg:certify_cce}
\end{algorithm}

\subsection{Proofs}
\begin{lem}\label{lemma:recursion_cce}
	(Recursion of best response CCE value gaps) For any fixed $(h,t)\in[H]\times[T]$, let $\tau = \tau(t)$ denote the stage of $t$. 
	Then, we have
	\[
		\zeta^t_h \leq \frac{1}{L_{\tau-1}} \sum_{j=t_{\tau-1}^{\text{start}}}^{t_{\tau-1}^{\text{end}}} \zeta_{h+1}^j + \operatorname{Reg}_{h}^{\tau-1}.
	\]
\end{lem}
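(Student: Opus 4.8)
The plan is to reproduce the CE recursion of Lemma~\ref{lemma:recursion}, with the $\alpha_t^j$-weighted average replaced by the uniform stage average and the per-state swap regret replaced by the per-state average external regret $\operatorname{Reg}_h^{\tau-1}$. The key preliminary is the value-consistency identity
\[
Q_{h,i}^{\tau}(s,\bm{a}) = \big(r_{h,i} + [P_h V_{h+1,i}^{\bar{\pi}_{h+1}^t}]\big)(s,\bm{a}), \qquad \forall\, t \text{ in stage } \tau,\ \bm{a}\in\aall,
\]
which I would establish by backward induction on $h$. The first observation to record is that the certified policy $\bar{\pi}_{h+1}^t$ of Algorithm~\ref{alg:certify_cce} depends on $t$ only through its stage index $\tau(t)$: each step resamples an iteration from the \emph{previous} stage with a fresh common seed, so the entire continuation is pinned down by the stage alone. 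Granting this, the base case $h=H$ is immediate from $Q_{H+1,i}\equiv 0$; the inductive step substitutes the stage-based value update of Algorithm~\ref{alg:oftrl_cce}, pushes $P_h$ through the uniform average over stage $\tau-1$, applies the induction hypothesis at step $h+1$ for stage $\tau-1$, and recognizes the resulting stage average as $[P_h V_{h+1,i}^{\bar{\pi}_{h+1}^t}]$.

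Fixing $i\in\mcn$ and $s\in\mcs$, I would next expand both value functions over the iterations $j$ of stage $\tau-1$. Because Algorithm~\ref{alg:certify_cce} plays $\pi_h^j$ uniformly over stage $\tau-1$ and then continues with $\bar{\pi}_{h+1}^j$, the identity (applied at stage $\tau-1$) gives
\[
V_{h,i}^{\bar{\pi}_h^t}(s) = \frac{1}{L_{\tau-1}}\sum_{j=t_{\tau-1}^{\text{start}}}^{t_{\tau-1}^{\text{end}}} \inner{\pi_{h,i}^j(s,\cdot),\, [Q_{h,i}^{\tau-1}\pi_{h,-i}^j](s,\cdot)}.
\]
For the best-response value, the crucial structural fact is that the opponents' continuation $\bar{\pi}_{h+1,-i}^j$ is identical for every $j$ in stage $\tau-1$ and is Markov given the next state, so agent $i$'s optimal continuation from step $h+1$ depends only on $s'$. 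Thus the best response factorizes into a one-shot matrix-game best response at step $h$ plus an optimal tail, and bounding the tail by $V^{\pi_i^\dagger}\le V^{\dagger}$ yields
\[
V_{h,i}^{\dagger,\bar{\pi}_{h,-i}^t}(s) \le \max_{\bm{x}\in\Delta(\mca_i)} \frac{1}{L_{\tau-1}}\sum_{j=t_{\tau-1}^{\text{start}}}^{t_{\tau-1}^{\text{end}}} \inner{\bm{x},\, \big[\big(r_{h,i} + [P_h V_{h+1,i}^{\dagger,\bar{\pi}_{h+1,-i}^j}]\big)\pi_{h,-i}^j\big](s,\cdot)}.
\]

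To close the recursion I would add and subtract $V_{h+1,i}^{\bar{\pi}_{h+1}^j}$ inside $P_h$: the term $r_{h,i}+[P_h V_{h+1,i}^{\bar{\pi}_{h+1}^j}]$ is exactly $Q_{h,i}^{\tau-1}$ by the identity, while the residual $[P_h(V_{h+1,i}^{\dagger,\bar{\pi}_{h+1,-i}^j}-V_{h+1,i}^{\bar{\pi}_{h+1}^j})]$, being an average over a transition kernel and over $\bm{x},\pi_{h,-i}^j$, is bounded by $\max_{s'}\big(V_{h+1,i}^{\dagger,\bar{\pi}_{h+1,-i}^j}(s')-V_{h+1,i}^{\bar{\pi}_{h+1}^j}(s')\big)\le \zeta_{h+1}^j$. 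Subtracting the realized value from this upper bound leaves precisely the per-state average external regret of agent $i$ against the fixed payoff $Q_{h,i}^{\tau-1}(s,\cdot)$ over stage $\tau-1$, namely $\operatorname{Reg}_{h,i}^{\tau-1}(s)$, plus $\frac{1}{L_{\tau-1}}\sum_{j}\zeta_{h+1}^j$; maximizing over $i\in\mcn$ and $s\in\mcs$ gives the claimed recursion. I expect the main obstacle to be this best-response factorization: unlike the CE case, where the deviation is a fixed per-state strategy modification, for CCE agent $i$ may deviate to an arbitrary history-dependent policy, so one must argue carefully that the correlated opponent continuation collapses to a state-Markov object — this is exactly what keeps a \emph{single} outer maximum over $\bm{x}$ (yielding external regret) rather than a looser sum of per-iteration best responses. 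The remaining steps are routine and parallel Lemma~\ref{lemma:recursion}.
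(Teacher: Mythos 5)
Your proposal is correct and follows essentially the same route as the paper's proof: establish the consistency identity $Q_{h,i}^{\tau}=r_{h,i}+[P_hV_{h+1,i}^{\bar{\pi}_{h+1}^t}]$ by backward induction, expand the realized and best-response values as uniform averages over stage $\tau-1$, add and subtract $[P_hV_{h+1,i}^{\bar{\pi}_{h+1}^j}]$ to split off the next-step gap bounded by $\max_{s'}$, identify the remainder as $\operatorname{Reg}_{h,i}^{\tau-1}(s)$, and take maxima over $i$ and $s$. Your explicit justification of the best-response factorization (that $\bar{\pi}_{h+1}^j$ depends on $j$ only through its stage, so agent $i$'s optimal continuation depends only on the next state) is a point the paper asserts without elaboration, so this is a welcome addition rather than a deviation.
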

\begin{proof}
	For any fixed $i\in\mcn$ and $s\in\mcs$, we know from the definition of $\bar{\pi}_h^t$ from Algorithm~\ref{alg:certify_cce} that
	\begin{equation}\label{eqn:d9}
		V_{h,i}^{\bar{\pi}_h^t}(s) = \frac{1}{L_{\tau-1}} \sum_{j=t_{\tau-1}^{\text{start}}}^{t_{\tau-1}^{\text{end}}} \inner{\pi_{h,i}^j(s,\cdot), \L \l r_{h,i} + [P_hV_{h+1,i}^{\bar{\pi}_{h+1}^j}] \r \pi_{h,-i}^j \R(s,\cdot)}.
	\end{equation}		
	From the definition of the best response value function,
	\[
	\begin{aligned}
		 V_{h,i}^{\dagger, \bar{\pi}_{h,-i}^t}(s) = & \max_{\pi_i^\dagger(s,\cdot)\in\Delta(\mca_i)}\frac{1}{L_{\tau-1}} \sum_{j=t_{\tau-1}^{\text{start}}}^{t_{\tau-1}^{\text{end}}} \Big\langle\pi_i^\dagger(s,\cdot), \big[ \big( r_{h,i} + [P_hV_{h+1,i}^{\dagger,\bar{\pi}_{h+1,-i}^j}] \big) \pi_{h,-i}^j \big](s,\cdot)\Big\rangle\\
		= & \max_{\pi_i^\dagger}\frac{1}{L_{\tau-1}} \sum_{j=t_{\tau-1}^{\text{start}}}^{t_{\tau-1}^{\text{end}}} \Big\langle\pi_i^\dagger(s,\cdot), \big[ \big( r_{h,i} + [P_hV_{h+1,i}^{\bar{\pi}_{h+1}^j}] 
		- [P_hV_{h+1,i}^{\bar{\pi}_{h+1}^j}] +  [P_hV_{h+1,i}^{\dagger,\bar{\pi}_{h+1,-i}^j}] \big) \pi_{h,-i}^j \big](s,\cdot)\Big\rangle\\
		\leq & \max_{\pi_i^\dagger}\frac{1}{L_{\tau-1}} \sum_{j=t_{\tau-1}^{\text{start}}}^{t_{\tau-1}^{\text{end}}} \Big\langle\pi_i^\dagger(s,\cdot), \big[ \big( r_{h,i} + [P_hV_{h+1,i}^{\bar{\pi}_{h+1}^j}] \big) \pi_{h,-i}^j \big](s,\cdot)\Big\rangle\\
		&+ \frac{1}{L_{\tau-1}} \sum_{j=t_{\tau-1}^{\text{start}}}^{t_{\tau-1}^{\text{end}}} \max_{s'\in\mcs}\Big( V_{h+1,i}^{\dagger, \bar{\pi}_{h+1,-i}^j} 
		- V_{h+1,i}^{\bar{\pi}_{h+1}^j} \Big)(s'). 
	\end{aligned}
	\]
	Subtracting \eqref{eqn:d9} from the above equation leads to:
	\begin{align}
	V_{h,i}^{\dagger, \bar{\pi}_{h,-i}^t}(s) -& V_{h,i}^{\bar{\pi}_h^t}(s) 
		\leq \frac{1}{L_{\tau-1}} \sum_{j=t_{\tau-1}^{\text{start}}}^{t_{\tau-1}^{\text{end}}} \max_{s'\in\mcs}\Big( V_{h+1,i}^{\dagger, \bar{\pi}_{h+1,-i}^j} 
		- V_{h+1,i}^{\bar{\pi}_{h+1}^j} \Big)(s') \nonumber\\
		& + \max_{\pi_i^\dagger}\frac{1}{L_{\tau-1}} \sum_{j=t_{\tau-1}^{\text{start}}}^{t_{\tau-1}^{\text{end}}} \Big\langle\pi_i^\dagger(s,\cdot) - \pi_{h,i}^j(s,\cdot), \big[ \big( r_{h,i} + [P_hV_{h+1,i}^{\bar{\pi}_{h+1}^j}] \big) \pi_{h,-i}^j \big](s,\cdot)\Big\rangle. \label{eqn:d8}
	\end{align}
	Using a similar inductive argument as in the proof of Lemma~\ref{lemma:recursion}, we can show that the term in \eqref{eqn:d8} is equal to $\operatorname{Reg}_{h,i}^{\tau-1}(s)$, which leads to 
	\[
	V_{h,i}^{\dagger, \bar{\pi}_{h,-i}^t}(s) - V_{h,i}^{\bar{\pi}_h^t}(s) 
	\leq \frac{1}{L_{\tau-1}} \sum_{j=t_{\tau-1}^{\text{start}}}^{t_{\tau-1}^{\text{end}}} \max_{s'\in\mcs}\Big( V_{h+1,i}^{\dagger, \bar{\pi}_{h+1,-i}^j} 
	- V_{h+1,i}^{\bar{\pi}_{h+1}^j} \Big)(s') + \operatorname{Reg}_{h,i}^{\tau-1}(s). 
	\]
	Since the above inequality holds for any $i\in\mcn$ and $s\in\mcs$, we can conclude that 
	\[
	\zeta^t_h \leq \frac{1}{L_{\tau-1}} \sum_{j=t_{\tau-1}^{\text{start}}}^{t_{\tau-1}^{\text{end}}} \zeta_{h+1}^j + \operatorname{Reg}_{h}^{\tau-1}.
	\]
	This completes the proof of the recursive relationship of best response CCE value gaps. 
\end{proof}

\noindent\textbf{Theorem~\ref{thm:cce}.} If Algorithm~\ref{alg:oftrl_cce} is run on an $N$-player episodic Markov game for $T$ iterations with a learning rate $\eta_\tau = \Theta(\frac{1}{N\log^4 L_\tau})$ in each stage $\tau$, the output policy $\bar{\pi}$ satisfies:
\[
\operatorname{CCE-Gap}(\bar{\pi}) = O\l\frac{N H^3 \log \amax \cdot \log^5  T}{T}\r.
\]
\begin{proof} 
We introduce a few more notations before presenting the proof.
Let $\tau(t)$ denote the index of the stage that iteration $t$ belongs to. 
We denote by $\bar{\tau}$ the total number of stages, i.e., $\bar{\tau}\defeq \tau(T)$. 
For any $(\tau,h,s)$, we define the per-state (average) regret for player $i\in\mcn$ in the $\tau$-th stage of the corresponding matrix game as
\[
\begin{aligned}
	\operatorname{Reg}_{h, i}^\tau(s) &\defeq\max _{\pi_{i}^{\dagger}(s, \cdot) \in \Delta(\mca_i)} \frac{1}{L_\tau} \sum_{j=t_\tau^{\text{start}}}^{t_\tau^{\text{end}}} \left\langle \pi_{i}^{\dagger}(s,\cdot)-\pi_{h,i}^j(s,\cdot), [Q_{h,i}^\tau \pi_{h,-i}^j](s,\cdot)\right\rangle,\\
	\operatorname{Reg}_{h}^\tau &\defeq \max_{i\in\mcn}\max_{s\in\mcs}\operatorname{Reg}_{h, i}^\tau(s),
\end{aligned}
\]
where $Q_{h,i}^\tau$ is player $i$'s Q-function estimate at stage $\tau$. 
For any $(h,t)\in[H]\times[T]$ and for the policy $\bar{\pi}_h^t$ as defined in Algorithm~\ref{alg:certify_cce}, we define the best response CCE value gap as
\[
\zeta_h^t \defeq \max_{i\in\mcn}\max_{s\in\mcs} \l V_{h,i}^{\dagger, \bar{\pi}_{h,-i}^t}(s) - V_{h,i}^{\bar{\pi}_h^t}(s)\r.
\]
	By the definition of $\bar{\pi}$ and $\zeta_h^t$, we have
	\begin{align}
		\operatorname{CCE-gap}(\bar{\pi}) = & \max_{i\in\mcn}\l V_{1,i}^{\dagger,\bar{\pi}_{-i}}(s_1)- V_{1,i}^{\bar{\pi}}(s_1)
		\r\nonumber \\
		\leq &\frac{1}{T}\sum_{t=1}^T \max_{i\in\mcn}\max_{s\in\mcs} \l V_{1,i}^{\dagger,\bar{\pi}^t_{1,-i}}(s)- V_{1,i}^{\bar{\pi}_1^t}(s) \r
		\leq \frac{1}{T}\sum_{t=1}^T \zeta_1^t.\label{eqn:c6}
	\end{align}
	We use Lemma~\ref{lemma:recursion_cce}  to establish the following recursive relationship of the best response CCE value gaps between two consecutive steps $h$ and $h+1$:
	\begin{equation}\label{eqn:c9}
		\zeta^t_h \leq \frac{1}{L_{\tau-1}} \sum_{j=t_{\tau-1}^{\text{start}}}^{t_{\tau-1}^{\text{end}}} \zeta_{h+1}^j + \operatorname{Reg}_{h}^{\tau-1}.
	\end{equation}
	Hence, upper bounding $\operatorname{CCE-gap}(\bar{\pi})$ breaks down to controlling the per-state regret in the corresponding matrix game for each $(\tau, s,h)\in[\bar{\tau}]\times \mcs\times[H]$. 
	In our stage-based OFTRL, since the reward matrix $Q_{h,i}^\tau$ in each stage is fixed and $\operatorname{Reg}_{h, i}^\tau(s)$ is the standard (average) regret, we can readily apply the individual regret bound of each player when running OFTRL in normal-form games \citep{daskalakis2021near}. 
	Specifically, Theorem 3.1 from \cite{daskalakis2021near} (restated as Lemma~\ref{lemma:matrix_external_regret} in our Appendix~\ref{app:lemmas}) shows that with a learning rate $\eta_\tau = \Theta(\frac{1}{N\log^4 L_\tau})$, there exists a constant $C>1$ such that for any $(i, \tau, s,h)\in\mcn\times [\bar{\tau}]\times \mcs\times[H]$,
	\begin{equation}\label{eqn:c8}
		\operatorname{Reg}_{h, i}^\tau(s) \leq \frac{C N H \log A_i \cdot \log^4 L_\tau}{L_\tau}.
	\end{equation}
	Notice that we multiplied the regret bound by $H$ because  \cite{daskalakis2021near} assumes the rewards to be from $[0,1]$ but our rewards lie in $[0,H]$. 
	According to the definition in Algorithm~\ref{alg:certify_cce}, the behavior of the policy $\bar{\pi}_h^t$ is unchanged for all $t$ within the same stage $\tau$ as it always uniformly samples a time index from the previous stage and plays the corresponding history policy. 
	Consequently, the value estimation error $\zeta_h^t$ does not change within a stage $\tau(t)$; that is, $\zeta_h^t$ takes the same value for all $t\in[t_\tau^{\text{start}}, t_\tau^{\text{end}}]$. 
	We occasionally slightly abuse the notation and use $\zeta_h^\tau$ to denote the estimation error for a stage $\tau$.
	This immediately implies that $\frac{1}{L_{\tau-1}} \sum_{j=t_{\tau-1}^{\text{start}}}^{t_{\tau-1}^{\text{end}}} \zeta_{h+1}^j = \zeta_{h+1}^{\tau-1}.$
	Substituting \eqref{eqn:c8} and the above equation back to the recursion \eqref{eqn:c9}, we obtain that
	\begin{align}
		\zeta^t_h \leq &\zeta_{h+1}^{\tau-1} + \frac{C N H \log \amax \cdot \log^4 L_{\tau-1}}{L_{\tau-1}}\nonumber\\
		\leq& \sum_{h'=h}^H \frac{C N H \log \amax \cdot \log^4  T}{L_{\tau-h'+h-1}}\\
		\leq&  \frac{3C N H^2 \log \amax \cdot \log^4  T}{L_\tau},\label{eqn:c7}
	\end{align}
	where the second step is by applying the inequality recursively over $h$, and the last step holds because our choice of the stage lengths $L_{\tau+1}=\floor{(1+1/H)L_\tau}$ implies that
	\[
	\frac{1}{L_{\tau-h'+h-1}}\leq \frac{1}{L_\tau} \l 1+\frac{1}{H} \r^{h'-h+1} \leq \frac{1}{L_\tau} \l 1+\frac{1}{H} \r^H \leq \frac{3}{L_\tau}. 
	\]
	We then substitute \eqref{eqn:c7} back to \eqref{eqn:c6} and change the counting method to obtain
	\[
	\begin{aligned}
		\operatorname{CCE-gap}(\bar{\pi}) 
		\leq \frac{1}{T}\sum_{t=1}^T \zeta_1^t
		\leq &\frac{1}{T}\sum_{\tau=1}^{\bar{\tau}} \sum_{j=t_\tau^{\text{start}}}^{t_\tau^{\text{end}}}\frac{3C N H^2 \log \amax \cdot \log^4  T}{L_\tau} \\
		\leq& \frac{3C N \bar{\tau} H^2 \log \amax \cdot \log^4  T}{T}. 
	\end{aligned}
	\]
	It remains to bound the total number of stages $\bar{\tau}$. 
	Since the lengths of the stages increase exponentially as $L_{\tau+1}=\floor{(1+1/H)L_\tau}$ and the $\bar{\tau}$ stages sum up to $T$ iterations, by taking the sum of a geometric series, it suffices to find a value of $\bar{\tau}$ such that $(1+1/H)^{\bar{\tau}}\geq T/H$. 
	Using the Taylor series expansion, one can show that 
	$( 1 + \frac{1}{H})^H \geq e-\frac{e}{2H}$, and hence any $\bar{\tau}\geq \frac{H\log T}{\log (e/2)}$ satisfies the condition. 
	This completes the proof of the theorem.
\end{proof}

\section{Simulation Details}\label{app:simulations}

\begin{table}[!t]
	\begin{minipage}{.5\linewidth}
		\centering
		\caption{Reward matrices for Player 1.}\label{tbl:player1}
		\begin{tabular}[b]{c|cc}
			$s_0$ & $b_0$ & $b_1$ \\
			\hline
			$a_0$ & 0.8 & 0.2 \\
			$a_1$ & 0.0 & 1.0
		\end{tabular}\qquad
		\begin{tabular}[b]{c|cc}
			$s_1$ & $b_0$ & $b_1$ \\
			\hline
			$a_0$ & 1.0 & 0.2 \\
			$a_1$ & 0.5 & 0.8
		\end{tabular}
	\end{minipage}%
	\begin{minipage}{.5\linewidth}
		\centering
		\caption{Reward matrices for Player 2.}\label{tbl:player2}
		\begin{tabular}[b]{c|cc}
			$s_0$ & $b_0$ & $b_1$ \\
			\hline
			$a_0$ & 0.2 & 1.0 \\
			$a_1$ & 0.5 & 0.0
		\end{tabular}\qquad
		\begin{tabular}[b]{c|cc}
			$s_1$ & $b_0$ & $b_1$ \\
			\hline
			$a_0$ & 0.5 & 1.0 \\
			$a_1$ & 1.0 & 0.2
		\end{tabular}
	\end{minipage} 
\end{table}
Our numerical studies are conducted on a simple general-sum Markov game with $2$ players, $2$ states $\mcs=\{s_0,s_1\}$ and $H=2$ steps per episode.
Each player has $2$ candidate actions $\mca=\{a_0,a_1\}$ and $\mcb=\{b_0,b_1\}$, respectively. 
The reward matrices for Player 1 and Player 2 at the two states are given in Tables~\ref{tbl:player1} and~\ref{tbl:player2}, respectively. 
The state transition function is defined as follows: In both states $s_0$ and $s_1$, if the two players take matching actions (namely $(a_0,b_0)$ or $(a_1,b_1)$), the system stays at the current state with probability 0.8, and transitions to the other state with probability 0.2. 
On the other hand, if the two players take opposite actions (namely $(a_0,b_1)$ or $(a_1,b_0)$), the environment will stay at the current state with probability 0.2, and will transition to the other state with probability 0.8. 
We choose a constant learning rate $\eta=0.2$ for all the three algorithms. 
We have also experimented with other choices of the transition and reward functions and have observed similar behavior as shown in Figures~\ref{fig:1} and~\ref{fig:2}.

\end{document}